\newcounter{theoremcounter}
\newtheorem{theorem}{Theorem}
\newtheorem{lemma}{Lemma}
\newtheorem{corollary}{Corollary}
\newenvironment{proof}[1][Proof]{\begin{trivlist}
\item[\hskip \labelsep {\bfseries #1}]}{\end{trivlist}}
\newcommand{\qed}{\hfill $\blacksquare$}
\newcommand{\bs}{\boldsymbol}
\newcommand{\bra}[1]{\left\langle #1\right|}
\newcommand{\ket}[1]{\left|#1\right\rangle}
\newcommand{\expval}[3]{\left\langle #1\middle|#2\middle|#3\right\rangle}
\newcommand{\ketbra}[2]{\ket{#1}\bra{#2}}
\newcommand{\partder}[2]{\frac{\partial #1}{\partial #2}}
\DeclareMathOperator{\Tr}{Tr}
\let\Re\relax
\DeclareMathOperator{\Re}{\text{Re}}
\let\Im\relax
\DeclareMathOperator{\Im}{\text{Im}}
\newcommand{\vast}{\bBigg@{4}}
\newcommand{\Vast}{\bBigg@{5}}
\newcommand{\thickhline}{%
    \noalign {\ifnum 0=`}\fi \hrule height 1pt
    \futurelet \reserved@a \@xhline
}
\newcolumntype{;}{@{\hskip\tabcolsep\vrule width 1pt\hskip\tabcolsep}}
\begin{document}

\title{Semiclassical Formulation of Gottesman-Knill and Universal Quantum Computation}
\author{Lucas Kocia}
\affiliation{Department of Physics, Tufts University, Medford, Massachusetts 02155, U.S.A.}
\author{Yifei Huang}
\affiliation{Department of Physics, Tufts University, Medford, Massachusetts 02155, U.S.A.}
\author{Peter Love}
\affiliation{Department of Physics, Tufts University, Medford, Massachusetts 02155, U.S.A.}
\begin{abstract}
  We give a path integral formulation of the time evolution of qudits of odd dimension. This allows us to consider semiclassical evolution of discrete systems in terms of an expansion of the propagator in powers of \(\hbar\). The largest power of \(\hbar\) required to describe the evolution is a traditional measure of classicality. We show that the action of the Clifford operators on stabilizer states can be fully described by a single contribution of a path integral truncated at order \(\hbar^0\) and so are ``classical,'' just like propagation of Gaussians under harmonic Hamiltonians in the continuous case. Such operations have no dependence on phase or quantum interference. Conversely, we show that supplementing the Clifford group with gates necessary for universal quantum computation results in a propagator consisting of a finite number of semiclassical path integral contributions truncated at order \(\hbar^1\), a number that nevertheless scales exponentially with the number of qudits. The same sum in continuous systems has an infinite number of terms at order \(\hbar^1\).
\end{abstract}
\maketitle

\section{Introduction}
\label{sec:intro}

The study of contextuality in quantum information has led to progress in our understanding of the Wigner function for discrete systems. Using Wootters' original derivation of discrete Wigner functions~\cite{Wootters87}, Eisert~\cite{Mari12}, Gross~\cite{Gross06}, and Emerson~\cite{Howard14} have pushed forward a new perspective on the quantum analysis of states and operators in finite Hilbert spaces by considering their quasiprobability representation on discrete phase space. Most notably, the positivity of such representations has been shown to be equivalent to non-contextuality, a notion of classicality~\cite{Howard14,Spekkens08}. Quantum gates and states that exhibit these features are the stabilizer states and Clifford operations used in quantum error correction and stabilizer codes. The non-contextuality of stabilizer states and Clifford operations explains why they are amenable to efficient classical simulation~\cite{Gottesman98,Aaronson04}.

This progress raises the question of how these discrete techniques are connected to prior established methods for simulating quantum mechanics in phase space. A particularly relevant method is trajectory-based semiclassical propagation, which has been widely used in the continuous context. Perhaps, when applied to the discrete case, semiclassical propagators can lend their physical intuition to outstanding problems in quantum information. Conversely, concepts from quantum information may serve to illuminate the comparatively older field of continuous semiclassics.

Quantum information attempts to classify the ``quantumness'' of a system by the presence or absence of various quantum resources. Semiclassical analysis proceeds by successive approximation using \(\hbar\) as a small parameter, where the power of \(\hbar\) required is a measure of ``quantumness''. Can these two views of quantum vs. classical be related? In the current paper, we build a bridge from the continuous semiclassical world to the discrete world and examine the classical-quantum characteristics of discrete quantum gates found in circuit models and their stabilizer formalism.

Stabilizer states are eigenvalue one eigenvectors of a commuting set of operators making up a group which does not contain \(-\mathbb I\). The set of stabilizer states is preserved by elements of the Clifford group, which is the normalizer of the Pauli group, and can be simulated very efficiently. More precisely, by the Gottesman-Knill Theorem, for \(n\) qubits, a quantum circuit of a Clifford gate can be simulated using \(\mathcal{O}(n)\) operations on a classical computer. Measurements require \(\mathcal{O}(n^2)\) operations with \(\mathcal{O}(n^2)\) bits of storage~\cite{Gottesman98,Aaronson04}.

The reason that stabilizer evolution by Clifford gates can be efficiently simulated classically has been explained in various ways. For instance, as already mentioned, stabilizer states have been shown to be non-contextual in qudit~\cite{Howard14} and rebit~\cite{Delfosse15} systems. The obstacle to proving this for qubits is that qubit systems possess state-independent contextuality~\cite{Mermin93}. Of course, we know how to simulate qubit stabilizer states and Clifford operations efficiently by the Gottesmann-Knill theorem~\cite{Aaronson04}. For recent progress relating non-contextuality to classical simulatability for qubits we refer the reader to~\cite{Raussendorf15}. It has also been shown for dimensions greater than two that a state of a discrete system is a stabilizer state if and only if its appropriately defined discrete Wigner function is non-negative~\cite{Gross06}. Therefore, when acted on by positive-definite operators, it can be considered as a proper positive-definite (classical) distribution.

Here, we instead relate the concept of efficient classical simulation to the power of \(\hbar\) that a path integral treatment must be expanded to in order to describe the quantum evolution of interest. It is well known that Gaussian propagation in continuous systems under harmonic Hamiltonians can be described with a single contribution from the path integral truncated at order \(\hbar^0\)~\cite{Heller75}. We show that the corresponding case in discrete systems exists. In the discrete case, stabilizer states take the place of Gaussians and harmonic Hamiltonians that additionally preserve the discrete phase space take the place of the general continuous harmonic Hamiltonians. In the discrete case we will only consider \(d\)-dimensional systems for odd \(d\) since their center representation (or Weyl formalism) is far simpler.

As a consequence, we will show that operations with Clifford gates on stabilizer states can be treated by a path integral independent of the magnitude of \(\hbar\) and are thus fundamentally classical. Such operations have no dependence on phase or quantum interference. This can be viewed as a restatement of the Gottesman-Knill theorem in terms of powers of \(\hbar\). 

We also consider more general propagation for discrete quantum systems. Quantum propagation in continuous systems can be treated by a sum consisting of an infinite number of contributions from the path integral truncated at order \(\hbar^1\). In discrete systems, we show that the corresponding sum consists of a finite number of terms, albeit one that scales exponentially with the number of qudits.

This work also answers a question posed by the recent work of Penney \emph{et al}. that explored a ``sum-over-paths'' expression for Clifford circuits in terms of symplectomorphisms on phase-space and associated generating actions. Penney \emph{et al}. raised the question of how to relate the dynamics of the Wigner representation of (stabilizer) states to the dynamics which are the solutions of the discrete Euler-Lagrange equations for an associated functional~\cite{Penney16}. By relying on the well-established center-chord (or Wigner-Weyl-Moyal) formalism in continuous~\cite{Almeida98} and discrete systems~\cite{Rivas99}, we show how the dynamics of Wigner representations are governed by such solutions related to a ``center generating'' function and that these solutions are harmonic and classical in nature.

We begin by giving an overview of the center-chord representation in continuous systems in Section~\ref{sec:contcenterchord}. Then, Section~\ref{sec:contpathintegral} introduces the expansion of the path integral in powers of \(\hbar\). This leads us to show what ``classical'' simulability of states in the continuous case corresponds to and to what higher order of \(\hbar\) an expansion is necessary to treat any quantum operator. Section~\ref{sec:discrete} then introduces the discrete variable case and defines its corresponding conjugate position and momentum operators. The path integral in discrete systems is then introduced in Section~\ref{sec:discretesemi} and, in Section~\ref{sec:stabilizergroup}, we define the Clifford group and stabilizer states. We prove that stabilizer state propagation within the Clifford group is captured fully up to order \(\hbar^0\) and so is efficiently simulable classically. Section~\ref{sec:discreteuniversalcomp} shows that extending the Clifford group to a universal gate set necessitates an expansion of the semiclassical propagator to a finite sum at order \(\hbar^1\). Finally, we close the paper with some discussion and directions for future work in Section~\ref{sec:conc}.

\section{Center-Chord Representation in Continuous Systems}
\label{sec:contcenterchord}

We define position operators \(\hat q\), \(\hat q \ket{q'} = q' \ket{q'}\), and momentum operators \(\hat p\) as their Fourier transform, \(\hat p = \hat{\mathcal F}^\dagger \hat q \hat{\mathcal F}\),
where
\begin{equation}
  \hat F = h^{\frac{n}{2}} \int^\infty_{-\infty} \mbox d \bs p \int^\infty_{-\infty} \mbox d \bs q \exp \left(\frac{2 \pi i}{\hbar} \bs p \cdot \bs q \right) \ketbra{\bs p}{\bs q}.
  \label{eq:fourier}
\end{equation}

Since \([\hat q, \hat p] = i \hbar\), these operators produce a particularly simple Lie algebra and are the generators of a Lie group. In this Lie group we can define the ``boost'' operator:
\begin{equation}
  \hat Z^{\delta p} \ket{q'} = e^{\frac{i}{\hbar} \hat q \delta p} \ket{q'} = e^{\frac{i}{\hbar} q' \delta p} \ket{q'},
\end{equation}
and the ``shift'' operator:
\begin{equation}
  \hat X^{\delta q} \ket{q'} = e^{-\frac{i}{\hbar} \hat p \delta q} \ket{q'} = \ket{q' + \delta q}.
\end{equation}

Using the canonical commutation relation and \(e^{\hat A + \hat B} = e^{\hat A} e^{\hat B} e^{-\frac{1}{2}[\hat A, \hat B]}\) if \([\hat A, \hat B]\) is a constant, it follows that
\begin{equation}
  \label{eq:contWeylrelation}
  \hat Z \hat X =  e^{\frac{i}{\hbar}} \hat X \hat Z.
\end{equation}
This is known as the Weyl relation and shows that the product of a shift and a boost (a generalized translation) in phase space is only unique up to a phase governed by \(\hbar\).

We proceed to introduce the chord representation of operators and states~\cite{Almeida98}. The generalized phase space translation operator (often called the Weyl operator) is defined as a product of the shift and boost:
\begin{equation}
\hat T(\bs \xi_p, \bs \xi_q) = e^{-\frac{i}{2 \hbar} \bs \xi_p \cdot \bs \xi_q} \hat Z^{ \bs \xi_p} \hat X^{ \bs \xi_q},
\end{equation}
where \(\bs \xi \equiv (\bs \xi_p, \bs \xi_q) \in \mathbb{R}^{2n}\) define the chord phase space. \(\hat T(\bs \xi_p, \bs \xi_q)\) is a translation by the chord \(\bs \xi\) in phase space. This can be seen by examining its effect on position and momentum states:
\begin{equation}
  \hat T(\bs \xi_p, \bs \xi_q) \ket{\bs q} = e^{\frac{i}{\hbar} \left(\bs q + \frac{\bs \xi_q}{2}\right) \cdot \bs \xi_p} \ket{\bs q+\bs \xi_q},
\end{equation}
and
\begin{equation}
  \hat T(\bs \xi_p, \bs \xi_q) \ket{\bs p} = e^{-\frac{i}{\hbar} \left(\bs p + \frac{\bs \xi_p}{2}\right) \cdot \bs \xi_q} \ket{\bs p+\bs \xi_p},
\end{equation}
which are shown in Fig.~\ref{fig:translations}. Changing the order of shifts \(\hat X\) and boosts \(\hat Z\) changes the phase of the translation in phase space by \(\bs \xi\), as given by the Weyl relation above (Eq.~\ref{eq:contWeylrelation}).

An operator \(\hat A\) can be expressed as a linear combination of these translations:
\begin{equation}
  \hat A = \int^\infty_{-\infty} \mbox{d} \bs \xi_p \int^\infty_{-\infty} \mbox{d} \bs \xi_q \, A_\xi(\bs \xi_p, \bs \xi_q) \hat T(\bs \xi_p, \bs \xi_q),
\end{equation}
where the weights are:
\begin{equation}
  A_\xi(\bs \xi_p, \bs \xi_q) = \Tr \left( \hat T(\bs \xi_p, \bs \xi_q)^\dagger \hat A \right).
\end{equation}
These weights give the chord representation of \(\hat A\).

\begin{figure}[ht]
\includegraphics[scale=1.0]{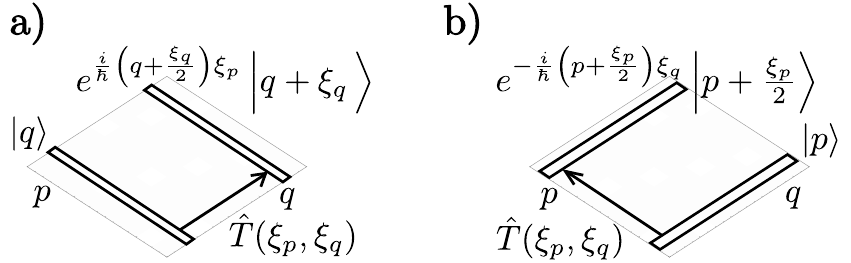}
\caption{Translation of a) a position state and b) a momentum state along the chord \((\xi_p, \xi_q)\) in phase space.}
\label{fig:translations}
\end{figure}

The Weyl function, or center representation, is dual to the chord representation. It is defined in terms of reflections instead of translations. We can define the reflection operator \(\hat R\) as the symplectic Fourier transform of the translation operator:
\begin{eqnarray}
  \label{eq:contreflection}
  &&\hat R(\bs x_p, \bs x_q) = \left(2 \pi \hbar\right)^{-n} \int^\infty_{-\infty} \mbox{d} \bs \xi e^{\frac{i}{\hbar} {\bs \xi}^T \bs{\mathcal J} {\bs x} } \hat T(\bs \xi)
\end{eqnarray}
where \(\bs x \equiv \left(\bs x_p, \bs x_q\right) \in \mathbb{R}^{2n}\) are a continuous set of Weyl phase space points or centers and \(\bs{\mathcal J}\) is the symplectic matrix
\begin{equation}
\bs{\mathcal J} =  \left( \begin{array}{cc} 0 & -\mathbb{I}_{n}\\ \mathbb{I}_{n} & 0 \end{array}\right),
\end{equation}
for \(\mathbb{I}_n\) the \(n\)-dimensional identity. The association of this operator with reflection can be seen by examining its effect on position and momentum states:
\begin{equation}
  \label{eq:contreflectpos}
  \hat R(\bs x_p, \bs x_q) \ket{\bs q} = e^{\frac{i}{\hbar} 2 (\bs x_q - \bs q) \cdot \bs x_p} \ket{2\bs x_q-\bs q},
\end{equation}
and
\begin{equation}
  \label{eq:contreflectmom}
  \hat R(\bs x_p, \bs x_q) \ket{\bs p} = e^{-\frac{i}{\hbar} 2 (\bs x_p - \bs p) \cdot \bs x_q} \ket{2\bs x_p-\bs p},
\end{equation}
which are sketched in Fig.~\ref{fig:reflections}. It is thus evident that \(\hat R(\bs x_p, \bs x_q)\) reflects the phase space around \(\bs x\).  Note that while we refer to reflections in the symplectic sense here, and in the rest of the paper; Eqs.~\ref{eq:contreflectpos} and~\ref{eq:contreflectmom} show that they are in fact ``an inversion around \(\bs x\)'' in every two-plane of conjugate \({x_p}_i\) and \({x_q}_i\). However, we will keep to the established nomenclature~\cite{Almeida98}.

\begin{figure}[ht]
\includegraphics[scale=1.0]{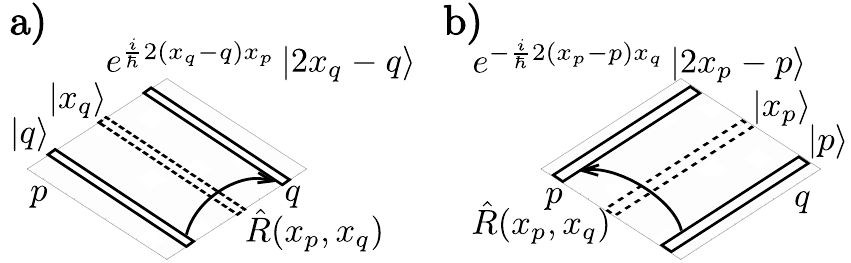}
\caption{Reflection of a) a position state and b) a momentum state across the center \((x_p, x_q)\) in phase space.}
\label{fig:reflections}
\end{figure}

An operator \(\hat A\) can now be expressed as a linear combination of reflections:
\begin{equation}
  \hat A = \left(2 \pi \hbar\right)^{-n} \int^\infty_{-\infty} \mbox{d} \bs x_p \int^\infty_{-\infty} \mbox{d} \bs x_q \, A_x(\bs x_p, \bs x_q) \hat R( \bs x_p, \bs x_q ),
  \label{eq:contsupofreflections}
\end{equation}
where
\begin{equation}
  A_x(\bs x_p, \bs x_q) = \Tr \left( \hat R(\bs x_p, \bs x_q)^\dagger \hat A \right),
  \label{eq:contweylfunction}
\end{equation}
and is called the center representation of \(\hat A\).

This representation is of particular interest to us because we can rewrite the components \(A_x\) for unitary transformations \(\hat A\) as:
\begin{equation}
A_x(\bs x_p, \bs x_q) = e^{\frac{i}{\hbar} S(\bs x_p, \bs x_q)},
\end{equation}
where \(S(\bs x_p, \bs x_q)\) is equivalent to the action the transformation \(A_x\) produces in Weyl phase space in terms of reflections around centers \(\bs x\)~\cite{Almeida98}. Thus, \(S\) is also called the ``center generating'' function.

For a pure state \(\ket{\Psi}\), the Wigner function given by Eq.~\ref{eq:contweylfunction} simplifies to:
\begin{eqnarray}
  && {\Psi}_x(\bs x_p, \bs x_q) = \left(2 \pi \hbar\right)^{-n}\\
  && \int^\infty_{-\infty} \mbox{d} \bs \xi_q \, \Psi \left( \bs x_q + \frac{\bs \xi_q }{2} \right) {\Psi^*} \left( \bs x_q - \frac{\bs \xi_q}{2} \right) e^{-\frac{i}{\hbar} \bs \xi_q \cdot \bs x_p}. \nonumber
\end{eqnarray}

The center representation for quantum states immediately yields the well-known Wigner function for continuous systems. The chord representation is the symplectic Fourier transform of the Wigner function.  The center and chord representations are dual to each other, and are the Wigner and Wigner characteristic functions res[ectively. Identifying the Wigner functions with the center representation, and the center representation as dual to the chord representation motivates the development of both center (Wigner) and chord representations for discrete systems in Section~\ref{sec:discrete}.

\section{Path Integral Propagation in continuous Systems}
\label{sec:contpathintegral}

Propagation from one quantum state to another can be expressed in terms of the path integral formalism of the quantum propagator. For one degree of freedom, with an initial position \(q\) and final position \(q'\), evolving under the Hamiltonian \(H\) for time \(t\), the propagator is
\begin{equation}
  \expval{q}{e^{-i H t/\hbar}}{q'} = \int \mathcal{D}[q_t] \, \exp \left(\frac{i}{\hbar} G[q_t]\right)
  \label{eq:feynmanprop}
\end{equation}
where \(G[q_t]\) is the action of the trajectory \(q_t\), which starts at \(q\) and ends at \(q'\) a time \(t\) later~\cite{Feynman12,Schulman12}.

Eq.~\ref{eq:feynmanprop} can be reexpressed as a variational expansion around the set of classical trajectories (a set of measure zero) that start at  \(q\) and end at \(q'\) a time \(t\) later. This is an expansion in powers of \(\hbar\):
\begin{eqnarray}
  \label{eq:semiclassprop}
  && \expval{q'}{e^{-\frac{i}{\hbar}Ht}}{q} =\\
  && \sum_j^\text{cl. paths} \int \mathcal{D}[q_{tj}] \, e^{\frac{i}{\hbar} \left( G[q_{tj}] + \delta G[q_{tj}] + \frac{1}{2} \delta^2 G[q_{tj}] + \ldots \right) }, \nonumber
\end{eqnarray}
where \(\delta G[q_{tj}]\) denotes a functional variation of the paths \(q_{tj}\) and for classical paths \(\delta G[q_{tj}] = 0\)~(For further details we refer the reader to Section 10.3 of~\cite{Tannor07}).

Terminating Eq.~\ref{eq:semiclassprop} to first order in \(\hbar\) produces the position state representation of the van Vleck-Morette-Gutzwiller propagator~\cite{Van28,Morette51,Gutzwiller67}:
\begin{eqnarray}
  \label{eq:vVMG}
  && \expval{q'}{e^{-\frac{i}{\hbar}Ht}}{q} = \\
  && \sum_j\left( \frac{- \frac{\partial^2 G_{jt}(q,q')}{\partial q \partial q'}}{2 \pi i \hbar} \right)^{1/2} e^{i \frac{G_{jt}(q,q')}{\hbar}} + \mathcal{O}(\hbar^2).\nonumber
\end{eqnarray}
where the sum is over all classical paths that satisfy the boundary conditions.

In the center representation, for \(n\) degrees of freedom, the semiclassical propagator \(U_t(\bs x_p, \bs x_q)\) becomes~\cite{Almeida98}:
\begin{eqnarray}
  &&U_t(\bs x_p, \bs x_q) = \\
  && \sum_j \left\{ \det \left[ 1 + \frac{1}{2} \bs{\mathcal J} \frac{\partial^2 S_{tj}}{\partial \bs x^2} \right] \right\}^{\frac{1}{2}} e^{\frac{i}{\hbar} S_{tj}(\bs x_p, \bs x_q)} + \mathcal{O}(\hbar^2),\nonumber
  \label{eq:contcenterrepvVMG}
\end{eqnarray}
where \(S_{tj}(\bs x_p, \bs x_q)\) is the center generating function (or action) for the center \(\bs x = (\bs x_p, \bs x_q)\equiv\frac{1}{2}\left[(\bs p, \bs q)+(\bs p', \bs q')\right]\).

In general this is an underdetermined system of equations and there are an infinite number of classical trajectories that satisfy these conditions. The accuracy of adding them up as part of this semiclassical approximation is determined by how separated these trajectories are with respect to \(\hbar\)---the saddle-point condition for convergence of the method of steepest descents. However, some Hamiltonians exhibit a single saddle point contribution and are thus exact at order \(\hbar^1\)~\footnote{See Ehrenfest's Theorem, e.g. in~\cite{gottfried13}}:

\begin{lemma}There is only one classical trajectory \((\bs p, \bs q)\underset{t}{\rightarrow} (\bs p', \bs q')\) that satisfies the boundary conditions \((\bs x_p, \bs x_q) = \frac{1}{2}\left[(\bs p, \bs q)+(\bs p', \bs q')\right]\) and \(t\) under Hamiltonians that are harmonic in \(\bs p\) and \(\bs q\).\end{lemma}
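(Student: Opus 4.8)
The plan is to reduce the boundary value problem to linear algebra. A Hamiltonian that is ``harmonic'' (a polynomial of degree at most two) in $\bs p$ and $\bs q$ has the form $H(\bs z) = \tfrac12 \bs z^T \bs A \bs z + \bs b^T \bs z + c$, with $\bs z = (\bs p,\bs q)$ and $\bs A$ symmetric (possibly time dependent). Hamilton's equations are then the \emph{affine} linear system $\dot{\bs z} = \bs{\mathcal J}(\bs A \bs z + \bs b)$, whose time-$t$ flow is an affine symplectic map $\bs z \mapsto \bs M_t \bs z + \bs c_t$ with $\bs M_t = e^{t \bs{\mathcal J}\bs A}$ (or, in the time-dependent case, the fundamental solution matrix) and $\bs c_t$ the inhomogeneous term generated by $\bs b$. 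So the first step is simply to record this closed form for the classical flow.

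Next I would impose the two constraints. Writing the initial point as $\bs z$ and the final point as $\bs z' = \bs M_t \bs z + \bs c_t$, the midpoint condition $\tfrac12(\bs z + \bs z') = \bs x = (\bs x_p,\bs x_q)$ becomes the single linear equation
\begin{equation}
  (\mathbb I + \bs M_t)\,\bs z = 2\bs x - \bs c_t .
\end{equation}
Hence there is exactly one admissible trajectory whenever $\mathbb I + \bs M_t$ is invertible, and then $\bs z$, and therefore the entire trajectory $\bs z_s$ for $0 \le s \le t$, is uniquely fixed by $\bs x$ and $t$. That this is the generic case follows because $\det(\mathbb I + \bs M_t)=0$ forces $-1$ to be an eigenvalue of $\bs M_t = e^{t\bs{\mathcal J}\bs A}$, i.e. $e^{t\lambda}=-1$ for some eigenvalue $\lambda$ of $\bs{\mathcal J}\bs A$; since $\bs{\mathcal J}\bs A$ has finitely many eigenvalues, this can occur only for a discrete set of times. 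As a consistency check, the van Vleck prefactor in the center-representation propagator $U_t(\bs x_p,\bs x_q)$ is (up to constants and the factor $\det\bs M_t=1$) proportional to $[\det(\mathbb I + \bs M_t)]^{-1/2}$, so the semiclassical amplitude is finite precisely when the classical trajectory through $\bs x$ is unique.

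The main obstacle is these exceptional ``caustic'' times where $\mathbb I + \bs M_t$ is singular — the textbook example being the harmonic oscillator at half a period, where $\bs M_t = -\mathbb I$ so that $\bs x = \tfrac12(\bs z - \bs z) = 0$ for every $\bs z$ (a whole family of trajectories when $\bs x = 0$, none otherwise). At exactly those times the center (Weyl) representation of the propagator is itself singular, so the statement should be read as holding away from this measure-zero set of times; I would either state the lemma with that caveat or recover such $t$ by continuity from the neighboring times where uniqueness holds. Modulo this caustic set, the proof is just the linear-algebra computation above, and it has the desired payoff: there being a single stationary trajectory, the sum over $j$ in the semiclassical $U_t$ collapses to one term, which is why harmonic Hamiltonians are exact already at order $\hbar^1$.
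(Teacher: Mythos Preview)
Your approach is essentially the same as the paper's: both observe that the time-$t$ flow of a harmonic Hamiltonian is an affine map on phase space, impose the midpoint condition, and reduce to a $2n\times 2n$ linear system. The paper's proof is in fact terser than yours---it simply counts $2n$ linear equations in $2n$ unknowns and declares the solution unique---whereas you correctly isolate the invertibility condition $\det(\mathbb I + \bs M_t)\ne 0$ and flag the caustic times (e.g.\ the harmonic oscillator at a half-period) where naive equation-counting fails. So your argument is, if anything, more careful than the paper's on precisely the edge case the paper glosses over; the paper implicitly works away from those caustics, as its Cayley parameterization of $\bs{\mathcal M}_t$ presupposes $(\mathbb I + \bs{\mathcal M}_t)^{-1}$ exists.
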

\begin{proof}
  For a quadratic Hamiltonian, the diagonalized equations of motion for \(n\)-dimensional \((\bs p', \bs q')\) are of the form:
  \begin{eqnarray}
    p'_i &=& \alpha(t)_i p_i + \beta(t)_i q_i + \gamma(t)_i,\\
    q'_i &=& \delta(t)_i p_i + \epsilon(t)_i q_i + \eta(t)_i,
  \end{eqnarray}
  for \(i \in \{1, 2, \ldots, n\}\).
  Since \(t\) is known and \((\bs p, \bs q)\) can be written in terms of \((\bs p', \bs q')\) by using \((\bs x_p, \bs x_q)\), this brings the total number of linear equations to \(2n\) with \(2n\) unknowns and so there exists one unique solution.\qed
\end{proof}

Since the equations of motion for a harmonic Hamiltonian are linear, we can write their solutions as:
\begin{equation}
  \left( \begin{array}{c}\bs p'\\ \bs q'\end{array} \right) = \bs{\mathcal M}_t \left[ \left( \begin{array}{c}\bs p\\ \bs q \end{array}\right) + \frac{1}{2} \bs \alpha_t \right] + \frac{1}{2} \bs \alpha_t,
  \label{eq:quadmap}
\end{equation}
where \(\mathcal {\bs \alpha}_t\) is an \(n\)-vector and \(\bs{\mathcal M}_t\) is an \(n\times n\) symplectic matrix, both with entries in \(\mathbb{R}\).
In this case, the center generating function \(S_t(\bs x_p, \bs x_q)\) is also quadratic, in particular
\begin{equation}
  \label{eq:quadcentgenfunction}
  S_t(\bs x_p, \bs x_q) =  \bs \alpha_t^T \bs {\mathcal J} \left(\begin{array}{c}\bs x_p\\ \bs x_q\end{array}\right) + (\bs x_p, \bs x_q) \bs{\mathcal B}_t \left(\begin{array}{c}\bs x_p\\ \bs x_q\end{array}\right),
\end{equation}
where \(\bs{\mathcal B}_t\) is a real symmetric \(n\times n\) matrix that is related to \(\bs{\mathcal M}_t\) by the Cayley parameterization of \(\bs{\mathcal M}_t\)~\cite{Golub12}:
\begin{equation}
  \label{eq:cayleyparam}
  \bs {\mathcal J} \bs{\mathcal B}_t = \left( 1 + \bs {\mathcal M}_t \right)^{-1} \left( 1 - \bs {\mathcal M}_t \right) = \left( 1 - \bs {\mathcal M}_t \right)  \left( 1 + \bs {\mathcal M}_t \right)^{-1}.
\end{equation}

Since one classical trajectory contribution is sufficient in this case, if the overall phase of the propagated state is not important, then the expansion w.r.t.~\(\hbar\) in Eq.~\ref{eq:semiclassprop} can be truncated at order \(\hbar^0\). Dropping terms that are higher order than \(\hbar^0\) and ignoring phase is equivalent to propagating the classical density \(\rho(\bs x)\) corresponding to the \((\bs p, \bs q) \)-manifold, under the harmonic Hamiltonian and determining its overlap with the \((\bs p', \bs q')\)-manifold after time \(t\). Such a treatment under a harmonic Hamiltonian results in just the absolute value of the prefactor of Eq.~\ref{eq:contcenterrepvVMG}: \(\left| \det \left[ 1 + \frac{1}{2} \bs{\mathcal J} \frac{\partial^2 S_{tj}}{\partial \bs x^2} \right] \right|^{\frac{1}{2}}\). Indeed, this was van Vleck's discovery before quantum mechanics was formalized~\cite{Van28}. The relative phases of different classical contributions are no longer a concern and the higher order terms only weigh such contributions appropriately.

Here we are interested in propagating between Gaussian states in the center representation. In continuous systems, a Gaussian state in \(n\) dimensions can be defined as:
\begin{equation}
\label{eq:Psi}
  \Psi_\beta(\bs q) = \left[\pi^{-n} \det\left( \Re {\bs \Sigma}_\beta \right) \right]^{\frac{1}{4}} \exp \left( \varphi \right),
\end{equation}
where
\begin{equation}
  \varphi = \frac{i}{\hbar} \bs p_\beta \cdot \left( \bs q - {\bs q}_\beta \right) - \frac{1}{2} \left( \bs q - {\bs q}_\beta \right)^T {\bs \Sigma}_\beta \left( \bs q - \bs q_\beta \right).
\end{equation}
\(\bs q_\beta \in \mathbb{R}^n\) is the central position, \(\bs p_\beta \in \mathbb{R}^n\) is the central momentum, and \(\bs \Sigma_\beta\) is a symmetric \(n\times n\) matrix where \(\Re \bs \Sigma_\beta\) is proportional to the spread of the Gaussian and \(\Im \bs \Sigma_\beta\) captures \(p\)-\(q\) correlation.

This state describes momentum states (\(\delta(\bs p - \bs p_\beta)\) in momentum representation) when \(\bs \Sigma_\beta \rightarrow 0\) and position states \(\delta(\bs q- \bs q_\beta)\) when \(\bs \Sigma_\beta \rightarrow \infty\). Rotations between these two cases corresponds to \(\Re \bs \Sigma_\beta = 0\) and \(\Im \bs \Sigma_\beta \ne 0\).

Gaussians remain Gaussians under evolution by a harmonic Hamiltonian, even if it is time-dependent. This can be shown by simply making the ansatz that the state remains a Gaussian and then solving for its time-dependent \(\bs \Sigma_\beta\), \(\bs p_\beta\), \(\bs q_\beta\) and phase from the time-dependent Schr\"odinger equation~\cite{Tannor07}, or just by applying the analytically known Feynman path integral, which is equivalent to the van Vleck path integral, to a Gaussian~\cite{Heller91}.

Moreover, with the propagator in the center representation known to only have have one saddle-point contribution for a harmonic Hamiltonian, it is fairly straight-forward to show that this is also true for its coherent state representation (that is, taking a Gaussian to another Gaussian). Applying the propagator to an initial and final Gaussian in the center representation
\begin{eqnarray}
  && \left[ \ketbra{\Psi_\beta}{\Psi_\beta} U_t \ketbra{\Psi_\alpha}{\Psi_\alpha} \right]_x (\bs x) = \\
  && \left( \pi \hbar \right)^{-3n} \int^\infty_{-\infty} \mbox d \bs x_1 \int^\infty_{-\infty} \mbox d \bs x_2 \, U_t(\bs x_1 + \bs x_2- \bs x) \nonumber\\
  && \qquad \qquad \qquad \times {\Psi_\beta}_x(\bs x_2) {\Psi_\alpha}_x (\bs x_1) \nonumber\\
  && \qquad \qquad \qquad \times e^{2\frac{i}{\hbar}(\bs x_1^T \bs{\mathcal J} {\bs x_2} + \bs x_2^T \bs{\mathcal J} {\bs x} + \bs x^T \bs{\mathcal J} {\bs x_1})}, \nonumber
\end{eqnarray}
we see that since \(U_t\) is a Gaussian from Eq.~\ref{eq:contcenterrepvVMG} (\(S_{tj}\) is quadratic for harmonic Hamiltonians) and since the Wigner representations of the Gaussians, \({\Psi_\alpha}_x\) and \({\Psi_\beta}_x\), are also known to be Gaussians, the full integral in the above equation is a Gaussian integral and thus evaluates to produce a Gaussian with a prefactor. This is equivalent to evaluating the integral by the method of steepest descents which finds the saddle points to be the points that satisfy \(\partder{\phi}{\bs x} = 0\) where \(\phi\) is the phase of the integrand's argument. Since this argument is quadratic, its first derivative is linear and so again there is only one unique saddle point.

Indeed, such an evaluation produces the coherent state representation of the vVMG propagator~\cite{Tomsovic15}. Just as we found with the center representation, the absolute value of its prefactor corresponds to the order \(\hbar^0\) term.

As a consequence of this single contribution at order \(\hbar^0\), it follows that the Wigner function of a state, \(\Psi_x(\bs x)\), evolves under the operator \(\hat V\) with an underlying harmonic Hamiltonian by \(\Psi_x(\bs{\mathcal M}_{\hat V}(\bs x + \bs \alpha_{\hat V}/2) + \bs \alpha_{\hat V}/2)\), where \(\bs{\mathcal M}_{\hat V}\) is the symplectic matric and \(\bs \alpha_{\hat V}\) is the translation vector associated with \(\hat V\)'s action~\cite{Rivas99}. 

Before proceeding to the discrete case, we note that the center representation that we have defined allows for a particularly simple way to express how far the path integral treatment must be expanded in \(\hbar\) in order to describe \emph{any} unitary propagation (not necessarily harmonic) in continuous quantum mechanics.

Reflections and translations can also be described by truncating Eq.~\ref{eq:contcenterrepvVMG} at order \(\hbar^1\) (or \(\hbar^0\) if overall phase isn't important) since they correspond to evolution under a harmonic Hamiltonian. In particular, translations are displacements along a chord \(\bs \xi\) and so have Hamiltonians \(H \propto \bs \xi_q \cdot \bs p - \bs \xi_p \cdot \bs q\). Reflections are symplectic rotations around a center \(\bs x\) and so have Hamiltonians \(H \propto \frac{\pi}{4}\left[ (\bs p- \bs x_p)^2 + (\bs q- \bs x_q)^2 \right]\).

  From Eq.~\ref{eq:contsupofreflections} we see that any operator can be expressed as an infinite Riemann sum of reflections. Therefore, since reflections are fully described by a truncation at order \(\hbar^1\), it follows that an infinite Riemann sum of path integral solutions truncated at order \(\hbar^1\) can describe any unitary evolution. The same statement can be made by considering the chord representation in terms of translations.

  Hence, quantum propagation in continuous systems can be fully treated by an infinite sum of contributions from a path integral approach truncated at order \(\hbar^1\).

  As an aside, in general this infinite sum isn't convergent and so it is often more useful to consider reformulations that involve a sum with a finite number of contributions. One way to do this is to apply the method of steepest descents \emph{directly} on the operator of interest and use the area between saddle points as the metric to determine the order of \(\hbar\) necessary, instead of dealing with an infinity of reflections (or translations). This results in the semiclassical propagator already presented, but associated with the full Hamiltonian instead of a sum of reflection Hamiltonians.

  In summary, we have explained why propagation between Gaussian states under Hamiltonians that are harmonic is simulable classically (i.e. up to order \(\hbar^0\)) in continuous systems. We will see that the same situation holds in discrete systems for stabilizer states, with the additional restriction that the propagation takes the phase space points, which are now discrete, to themselves.

\section{Discrete Center-Chord Representation}
\label{sec:discrete}

We now proceed to the discrete case and introduce the center-chord formalism for these systems. It will be useful for us to define a pair of conjugate degrees of freedom \(p\) and \(q\) for discrete systems. Unfortunately, this isn't as straight-forward as in the continuous case, since the usual canonical commutation relations cannot hold in a finite-dimensional Hilbert space where the operators are bounded (since \(\Tr [\hat p, \hat q] = 0\)).

We begin in one degree of freedom. We label the computational basis for our system by \(n \in {0, 1, \ldots, d-1}\), for \(d\) odd and we assume that \(d\) is odd for the rest of this paper. We identify the discrete position basis with the computational basis and define the ``boost'' operator as diagonal in this basis:
\begin{equation}
  \hat {Z}^{\delta p} \ket{n} \equiv \omega^{n \delta p} \ket{n},
\end{equation}
where \(\omega\) will be defined below.

We define the normalized discrete Fourier transform operator to be equivalent to the Hadamard gate:
\begin{equation}
  \hat {F} = \frac{1}{\sqrt{d}}\sum_{m,n \in \mathbb{Z}/d\mathbb{Z}} \omega^{m n} \ket{m}\bra{n}.
  \label{eq:hadamard}
\end{equation}
This allows us to define the Fourier transform of \(\hat {Z}\):
\begin{equation}
   \hat {X} \equiv \hat {F}^\dagger \hat {Z} \hat {F}
\end{equation}
Again, as before, we call \(\hat {X}\) the ``shift'' operator since
\begin{equation}
  \hat {X}^{\delta q} \ket{n} \equiv \ket{n\oplus\delta q},
\end{equation}
where \(\oplus\) denotes mod-\(d\) integer addition.
It follows that the Weyl relation holds again:
\begin{equation}
  \hat {Z} \hat {X} = \omega \hat {X} \hat {Z}.
\end{equation}

The group generated by \(\hat {Z}\) and \(\hat {X}\) has a \(d\)-dimensional irreducible representation only if \(\omega^d=1\) for odd \(d\). Equivalently, there are only reflections relating any two phase space points on the Weyl phase space ``grid'' if \(d\) is odd~\cite{Rivas00}. We take \(\omega \equiv \omega(d) = e^{2 \pi i/d}\)~\cite{Sun92}. This was introduced by Weyl~\cite{Weyl32}. 

Note that this means that \(\hbar = \frac{d}{2 \pi}\) or \(h = d\). This means that the classical regime is most closely reached when the dimensionality of the system is reduced (\(d \rightarrow 0\)) and thus the most ``classical'' system we can consider here is a qutrit (since we keep \(d\) odd and greater than one). This is the opposite limit considered by many other approaches where the classical regime is reached when \(d \rightarrow \infty\).

One way of interpreting the classical limit in this paper is by considering \(h\) to be equal to the inverse of the density of states in phase space (i.e. in a Wigner unit cell). As \(\hbar \rightarrow 0\) phase space area decreases as \(\hbar^2\) but the number of states only decreases as \(\hbar\) leading to an overall density increase of \(\hbar\). This agrees with the notion that states should become point particles of fixed mass in the classical limit.

By analogy with continuous finite translation operators, we reexpress the shift \(\hat {X}\) and boost \(\hat {Z}\) operators in terms of conjugate \(\hat { p}\) and \(\hat { q}\) operators:\\
\begin{equation}   \hat {Z} \ket{n} = e^{\frac{2 \pi i }{d}\hat q} \ket{n} = e^{\frac{2 \pi i}{d} n} \ket{n}, \end{equation}
and
\begin{equation}   \hat {X} \ket{n} = e^{-\frac{2 \pi i }{d} \hat p} \ket {n} = \ket{n\oplus1}. \end{equation}
Hence, in the diagonal ``position'' representation for \(\hat Z\):
\begin{equation}
  \hat {Z} = \left( \begin{array}{cccccc} 1 & 0 & 0 & \cdots & 0\\
  0 & e^{\frac{2 \pi i}{d}} & 0 & \cdots & 0\\
  0 & 0 & e^{\frac{4 \pi i}{d}} & \cdots & 0\\
  \vdots & \vdots & \vdots & \ddots & \vdots\\
  0 & 0 & 0 & 0 & e^{\frac{2 (d-1) \pi i}{d}} \end{array} \right),
\end{equation}
and
\begin{equation}
  \hat {X} = \left( \begin{array}{cccccc} 0 & 0 & \cdots & 0 & 1\\
  1 & 0 & \cdots & 0 & 0\\
  0 & 1 & 0 & \cdots & 0\\
  \vdots & \ddots & \ddots & \ddots & \vdots\\
  0 & \cdots & 0 & 1 & 0 \end{array} \right).
\end{equation}

Thus,
\begin{equation}
  \hat {q} = \frac{d}{2 \pi i} \log \hat {Z} = \sum_{n \in \mathbb{Z}/d\mathbb{Z}} n \ket{n} \bra{n},
\end{equation}
and
\begin{equation}
  \hat { p} = \hat {F}^\dagger \hat { q} \hat {F}.
\end{equation}

Therefore, we can interpret the operators \(\hat { p}\) and \(\hat { q}\) as a conjugate pair similar to conjugate momenta and position in the continuous case. However, they differ from the latter in that they only obey the weaker \emph{group} commutation relation
\begin{equation}
  e^{i j \hat { q}/\hbar}e^{i k \hat { p}/\hbar} e^{-i j \hat { q}/\hbar}e^{-i k \hat { p}/\hbar} = e^{-i jk /\hbar} \hat {\mathbb I}.
  \label{eq:groupcommrel}
\end{equation}
This corresponds to the usual canonical commutation relation for \(p\) and \(q\)'s algebra at the origin of the Lie group (\(j = k = 0\)); expanding both sides of Eq.~\ref{eq:groupcommrel} to first order in \(p\) and \(q\) yields the usual canonical relation.

We proceed to introduce the Weyl representation of operators and states in discrete Hilbert spaces with odd dimension \(d\) and \(n\) degrees of freedom~\cite{Wootters87,Wootters03,Gibbons04}. The generalized phase space translation operator (the Weyl operator) is defined as a product of the shift and boost with a phase appropriate to the \(d\)-dimensional space:
\begin{equation}
\hat {T}(\bs \xi_p, \bs \xi_q) = e^{-i \frac{\pi}{d} \bs \xi_p \cdot \bs \xi_q} \hat {Z}^{ \bs \xi_p} \hat {X}^{ \bs \xi_q},
\end{equation}
where \(\bs \xi \equiv (\bs \xi_p, \bs \xi_q) \in (\mathbb{Z} / d \mathbb{Z})^{2n}\) and form a discrete ``web'' or ``grid'' of chords. They are a discrete subset of the continous chords we considered in the infinite-dimensional context in Section~\ref{sec:contcenterchord} and their finite number is an important consequence of the discretization of the continuous Weyl formalism.

Again, an operator \(\hat {A}\) can be expressed as a linear combination of translations:
\begin{equation}
  \hat {A} = d^{-n} \sum_{\substack{\bs \xi_p, \bs \xi_q \in \\ (\mathbb{Z} / d \mathbb{Z})^{ n}}} A_\xi(\bs \xi_p, \bs \xi_q) \hat {T}(\bs \xi_p, \bs \xi_q),
\end{equation}
where the weights are the chord representation of the function \(\hat {A}\):
\begin{equation}
  {A}_\xi(\bs \xi_p, \bs \xi_q) = d^{-n} \Tr \left( \hat {T}(\bs \xi_p, \bs \xi_q)^\dagger \hat {A} \right).
\end{equation}
When applied to a state \(\hat \rho\), this is also called the ``characteristic function'' of \(\hat \rho\)~\cite{Ferrie11}.

As before, the center representation, based on reflections instead of translations, requires an appropriately defined reflection operator. We can define the discrete reflection operator \(\hat {R}\) as the symplectic Fourier transform of the discrete translation operator we just introduced:
\begin{equation}
\hat {R}(\bs x_p, \bs x_q) = d^{-n} \sum_{\substack{\bs \xi_p, \bs \xi_q \in \\ (\mathbb{Z} / d \mathbb{Z})^{ n}}} e^{\frac{2 \pi i}{d} (\bs \xi_p, \bs \xi_q) \bs{\mathcal J} (\bs x_p, \bs x_q)^T} \hat {T}(\bs \xi_p, \bs \xi_q).
\end{equation}

With this in hand, we can now express a finite-dimensional operator \(\hat {A}\) as a superposition of reflections:
\begin{equation}
  \hat {A} = d^{-n} \sum_{\substack{\bs x_p, \bs x_q \in \\ (\mathbb{Z} / d \mathbb{Z})^{ n}}} {A}_x(\bs x_p, \bs x_q) \hat {R}( \bs x_p, \bs x_q ),
  \label{eq:discretesupofreflections}
\end{equation}
where
\begin{equation}
  {A}_x(\bs x_p, \bs x_q) = d^{-n} \Tr \left( \hat {R}(\bs x_p, \bs x_q)^\dagger \hat {A} \right).
  \label{eq:weylfunction}
\end{equation}
\(\bs x \equiv (\bs x_p, \bs x_q) \in (\mathbb{Z} / d \mathbb{Z})^{2n}\) are centers or Weyl phase space points and, like their \((\bs \xi_p, \bs \xi_q)\) brethren, form a discrete subgrid of the continuous Weyl phase space points considered in Section~\ref{sec:contcenterchord}.

Again, the center representation is of particular interest to us because for unitary gates \(\hat A\) we can rewrite the components \({A}_x(\bs x_p, \bs x_q)\) as:
\begin{equation}
{A}_x(\bs x_p, \bs x_q) = \exp \left[\frac{i}{\hbar} S(\bs x_p, \bs x_q)\right]
\end{equation}
where \(S(\bs x_p, \bs x_q)\) is the argument to the exponential, and is equivalent to the action of the operator in center representation (the center generating function).

Aside from Eq.~\ref{eq:weylfunction}, the center representation of a state \(\hat \rho\) can also be directly defined as the symplectic Fourier transform of its chord representation, \(\rho_\xi\)~\cite{Gross06}:
\begin{equation}
  \rho_x(\bs x_p, \bs x_q) = d^{-n} \sum_{\substack{\bs \xi_p, \bs \xi_q \in \\ (\mathbb{Z} / d \mathbb{Z})^{ n}}} e^{\frac{2 \pi i}{d} (\bs \xi_p, \bs \xi_q) \bs{\mathcal J} (\bs x_p, \bs x_q)^T} \rho_\xi(\bs \xi_p,\bs \xi_q).
  \label{eq:weylfunction2}
\end{equation}

We note again that for a pure state \(\ket{\Psi}\), the Wigner function from Eqs.~\ref{eq:weylfunction} and~\ref{eq:weylfunction2} simplifies to:
\begin{widetext}
\begin{eqnarray}
  \label{eq:weylfunctionpurestatediscrete}
  {\Psi}_x(\bs x_p, \bs x_q) &=& d^{-n} \sum_{\substack{\bs \xi_q \in \\(\mathbb{Z} / d \mathbb{Z})^{ n}}} e^{-\frac{2 \pi i}{d} \bs \xi_q \cdot \bs x_p} \Psi \left( \bs x_q + \frac{(d+1) \bs \xi_q }{2} \right) {\Psi^*} \left( \bs x_q - \frac{(d+1) \bs \xi_q}{2} \right).
\end{eqnarray}
\end{widetext}

It may be clear from this short presentation that the chord and center representations are dual to each other. A thorough review of this subject can be found in~\cite{Almeida98}.

\section{Path Integral Propagation in Discrete Systems}
\label{sec:discretesemi}

Rivas and Almeida~\cite{Rivas99} found that the continuous infinite-dimensional vVMG propagator can be extended to finite Hilbert space by simply projecting it onto its finite phase space tori. This produces:
  \begin{eqnarray}
    \label{eq:discretecenterrepvVMG}
    && {U}_t(\bs x) = \\
    && \left< \sum_j \left\{ \det \left[ 1 + \frac{1}{2} \bs{\mathcal J} \frac{\partial^2 S_{tj}}{\partial \bs{x}^2} \right] \right\}^{\frac{1}{2}} e^{\frac{i}{\hbar} S_{tj}(\bs x)} e^{i \theta_k} \right>_k + \mathcal{O}(\hbar^2),\nonumber
  \end{eqnarray}
where an additional average must be taken over the \(k\) center points that are equivalent because of the periodic boundary conditions. Maintaining periodicity requires that they accrue a phase \(\theta_k\)~\footnote{See Eq.~\(5.18\) in~\cite{Rivas99} for a definition of the phase.}. The derivative \(\frac{\partial^2 S_{tj}}{\partial \bs {x}^2}\) is performed over the continuous function \(S_{tj}\) defined after Eq.~\ref{eq:contcenterrepvVMG}, but only evaluated at the discrete Weyl phase space points \(\bs x \equiv (\bs x_p, \bs x_q)\in \left(\mathbb{Z}/d\mathbb{Z}\right)^{2n}\). The prefactor can also be reexpressed:
\begin{equation}
  \left\{ \det \left[ 1 + \frac{1}{2} \bs{\mathcal J} \frac{\partial^2 S_{tj}}{\partial \bs{x}^2} \right] \right\}^{\frac{1}{2}} = \left\{ 2^d \det \left[ 1 + \bs{\mathcal M_{tj}} \right] \right\}^{-\frac{1}{2}},
\end{equation}
which is perhaps more pleasing in the discrete case as it does not involve a continuous derivative.

As in the continuous case, for a harmonic Hamiltonian \(H(\bs p, \bs q)\), the center generating function \(S(\bs x_p, \bs x_q)\) is equal to \(\bs{\alpha}^T \bs{\mathcal J} \bs x + \bs{x}^T \bs{\mathcal B} \bs x\) where
Eq.~\ref{eq:cayleyparam} and Eq.~\ref{eq:quadmap} hold. Moreover, if the Hamiltonian takes Weyl phase space points to themselves, then by the same equations it follows that \(\bs{\mathcal M}\) and \(\bs \alpha\) must have integer entries.

This implies that for \(\bs m, \bs n \in \mathbb{Z}^{ n}\),
\begin{eqnarray}
  && \bs{\mathcal M} \left(\begin{array}{c}\bs p+ \bs m d  + \bs \alpha_p/2\\\bs q + \bs n d + \bs \alpha_q/2\end{array}\right) + \left( \begin{array}{c} \bs \alpha_p/2\\ \bs \alpha_q/2 \end{array} \right) \nonumber\\
  \label{eq:harmonicM}
  &=& \bs{\mathcal M} \left(\begin{array}{c}\bs p + \bs \alpha_p/2 \\ \bs q + \bs \alpha_q/2 \end{array}\right) + \left( \begin{array}{c} \bs \alpha_p/2\\ \bs \alpha_q/2 \end{array} \right) + d \bs{\mathcal M}\left(\begin{array}{c}\bs m\\ \bs n\end{array}\right) \\
  &=& \left(\begin{array}{c}\bs p'\\ \bs q'\end{array}\right) \mod \bs d. \nonumber
\end{eqnarray}
Therefore, phase space points \((\bs p, \bs q)\) that lie on Weyl phase space points go to the equivalent Weyl phase space points \((\bs p', \bs q')\).

Moreover, again if \(\bs m, \bs n \in \mathbb{Z}^n\),
\begin{eqnarray}
  && S(\bs x_p + \bs m d, \bs x_q + \bs n d) \nonumber\\
  &=& \left( \begin{array}{c}\bs x_p+ \bs m d\\ \bs x_q+ \bs n d\end{array} \right)^T \bs A \left( \begin{array}{c}\bs x_p+ \bs m d\\ \bs x_q+ \bs n d\end{array} \right) \nonumber\\
      && + \bs b \cdot \left( \begin{array}{c}\bs x_p+ \bs m d\\ \bs x_q+ \bs n d\end{array} \right)\\
  &=& \left( \begin{array}{c}\bs x_p\\ \bs x_q\end{array} \right)^T \bs A \left( \begin{array}{c}\bs x_p\\ \bs x_q\end{array} \right) + \bs b \cdot \left( \begin{array}{c}\bs x_p\\ \bs x_q\end{array} \right) \nonumber\\
              && + d \left[ 2 \left( \begin{array}{c}\bs x_p\\ \bs x_q\end{array} \right)^T \bs A \left( \begin{array}{c}\bs m\\ \bs n\end{array} \right) \right.\nonumber\\
                    && \qquad \left.+ d \left( \begin{array}{c}\bs m\\ \bs n\end{array} \right)^T \bs A \left( \begin{array}{c}\bs m\\ \bs n\end{array} \right) + \bs b \cdot \left( \begin{array}{c}\bs m\\ \bs n\end{array} \right)\right] \nonumber\\
  \label{eq:harmonicS}
  &=& S(\bs x_p, \bs x_q) \mod d, \nonumber
\end{eqnarray}
for some symmetric \(\bs A \in \mathbb{Z}^{n\times n}\) and \(\bs b \in \mathbb{Z}^{n}\). Therefore, these equivalent trajectories also have equivalent actions (since the action is multiplied by \(\frac{2 \pi i}{d}\) and exponentiated).

Hence, there is only one term to the sum in Eq.~\ref{eq:discretecenterrepvVMG}. Moreover,~\cite{Rivas00} showed that the sum over the phases \(\theta_k\) produces only a global phase that can be factored out. Therefore, if we can neglect the overall phase,
\begin{equation}
  {U}_t(\bs x) = \left| 2^d \det \left[ 1 + \bs{\mathcal M} \right] \right|^{\frac{1}{2}},
  \label{eq:discretesemiproplowestorder}
\end{equation}
where the classical trajectories whose centers are \((\bs x_p, \bs x_q)\) satisfy the periodic boundary conditions.

As in the continuous case, we point out that this means that translations and reflections are fully captured by a path integral treatment that is truncated at order \(\hbar^1\) (or order \(\hbar^0\) if their overall phase isn't important) because their Hamiltonians are harmonic, but in the discrete case there is an additional requirement that they are evaluated at chords/centers that take Weyl phase space points to themselves.

Just as in the continuous case, the single contribution at order \(\hbar^0\) implies that the propagator of the Wigner function of states, \(\ketbra{\Psi}{\Psi}_x(\bs x)\) under gates \(\hat V\) with underlying harmonic Hamiltonians is captured by \(\ketbra{\Psi}{\Psi}_x(\bs{\mathcal M}_{\hat V} (\bs x + \bs \alpha_{\hat V}/2) + \bs \alpha_{\hat V}/2)\) for \(\bs{\mathcal M}_{\hat V}\) and \(\bs \alpha_{\hat V}\) associated with \(\hat V\).

\section{Stabilizer Group}
\label{sec:stabilizergroup}

Here we will show that the Hamiltonians corresponding to Clifford gates are harmonic and take Weyl phase space points to themselves. Thus they can be captured by only the single contribution of Eq.~\ref{eq:discretesemiproplowestorder} at lowest order in \(\hbar\). This then implies that stabilizer states can also be propagated to each other by Clifford gates with only a single contribution to the sum in Eq.~\ref{eq:discretecenterrepvVMG}.

The Clifford gate set of interest can be defined by three generators: a single qudit Hadamard gate \(\hat{F}\) and phase shift gate \(\hat{P}\), as well as the two qudit controlled-not gate \(\hat{C}\). We examine each of these in turn.

\subsection{Hadamard Gate}
The Hadamard gate was defined in Eq.~\ref{eq:hadamard} and is a rotation by \(\frac{\pi}{2}\) in phase space counter-clockwise. Hence, for one qudit, it can be written as the map in Eq.~\ref{eq:harmonicM} where
\begin{equation}
  \label{eq:stabmathad}
\bs{\mathcal M}_{\hat {F}} = \left( \begin{array}{cc} 0 & 1\\ -1 & 0 \end{array} \right),
\end{equation}
and \(\bs \alpha_{\hat {F}} = (0,0)\). We have set \(t=1\) and drop it from the subscripts from now on.
Since \(\bs \alpha\) is vanishing and \(\bs{\mathcal M}\) has integer entries, this is a cat map and such maps have been shown to correspond to Hamiltonians~\cite{Keating91}
\begin{eqnarray}
  \label{eq:hadhamiltonian}
  &&H(p,q) =\\
  &&f(\Tr \bs{\mathcal M} ) \left[ \mathcal M_{12} p^2 - \mathcal M_{21} q^2 + \left(\mathcal M_{11} - \mathcal M_{22} \right) pq \right],\nonumber
\end{eqnarray}
where
\begin{equation}
  f(x) = \frac{\sinh^{-1}(\frac{1}{2}\sqrt{x^2-4})}{\sqrt{x^2-4}}.
\end{equation}
For the Hadamard \(\bs {\mathcal M}_{\hat {F}}\) this corresponds to \(H_{\hat {F}} = \frac{\pi}{4} (p^2 + q^2)\), a harmonic oscillator. The center generating function \(S(x_p, x_q)\) is thus \((x_p, x_q ) \bs{\mathcal B} (x_p, x_q)^T\) and solving Eq.~\ref{eq:cayleyparam} finds for the one-qudit Hadamard,
\begin{equation}
 \bs{\mathcal{B}}_{\hat {F}} = \left(\begin{array}{cc}1 & 0\\ 0 & 1 \end{array} \right).
\end{equation}
Thus, \(S_{\hat {F}}(x_p, x_q) = x_p^2 + x_q^2\). Indeed, applying Eq.~\ref{eq:weylfunction} to Eq.~\ref{eq:hadamard} reveals that the Hadamard's center function (up to a phase) is:
\begin{equation}
  {F}_x(x_p,x_q) = e^{\frac{2 \pi i}{d}(x_p^2+x_q^2)}.
  \label{eq:hadamardcenterfn}
\end{equation}

Eq.~\ref{eq:stabmathad} shows how to map Weyl phase space to Weyl phase space under the Hadamard transformation. Furthermore this map is pointwise, which implies the quadratic form of the center generating function obtained in Eq.~\ref{eq:hadamardcenterfn}.

\subsection{Phase Shift Gate}
The phase shift gate can be generalized to odd \(d\)-dimensions~\cite{Gottesman99} by setting it to:
\begin{equation}
  \hat{P} = \sum_{j \in \mathbb{Z}/d \mathbb{Z}} \omega^{\frac{(j-1)j}{2}} \ketbra{j}{j}.
  \label{eq:phaseshift}
\end{equation}
Examining its effect on stabilizer states, it is clear that it is a \(q\)-shear in phase space from an origin displaced by \(\frac{d-1}{2}\equiv-\frac{1}{2}\) to the right. This can be expressed as the map in Eq.~\ref{eq:harmonicM} with
\begin{equation}
\bs{\mathcal M}_{\hat {P}} = \left( \begin{array}{cc} 1 & 1\\ 0 & 1 \end{array} \right),
\end{equation}
and \(\bs \alpha_{\hat {P}} = \left( -\frac{1}{2}, 0 \right)\).

This corresponds to
\begin{equation}
\bs{\mathcal B}_{\hat {P}} = \left( \begin{array}{cc} 0 & 0\\ 0 & \frac{1}{2} \end{array} \right).
\end{equation}

Solving Eq.~\ref{eq:quadcentgenfunction} with this \(\bs{\mathcal B}_{\hat {P}}\) and \(\bs{\alpha}_{\hat {P}}\) reveals that \(S_{\hat {P}}(x_p, x_q) = -\frac{1}{2} x_q + \frac{1}{2} x_q^2\). Again, this agrees with the argument of the center representation of the phase-shift gate obtained by applying Eq.~\ref{eq:weylfunction} to Eq.~\ref{eq:phaseshift}:
\begin{equation}
{P}_x(x_p,x_q) = e^{\frac{2 \pi i}{d} \frac{1}{2} (-x_q + x_q^2)}.
\end{equation}
Discretization the equations of motion for harmonic evolution for unit timesteps leads to:
\begin{equation}
  \left(\begin{array}{c}\bs p'\\ \bs q'\end{array}\right) = \left(\begin{array}{c}\bs p\\ \bs q\end{array}\right) + \bs{\mathcal J} \left(\partder{H}{\bs p}, \partder{H}{\bs q}\right)^T,
  \label{eq:harmonicevol}
\end{equation}
where the last derivative is on the continuous function \(H\), but only evaluated on the discrete Weyl phase space points. It follows that
\begin{equation}
  \label{eq:phaseshifthamiltonian}
  H_{\hat {P}} = -\frac{d+1}{2} q^2 + \frac{d+1}{2} q.
\end{equation}

We have obtained the Hamiltonian for the phase-shift gate by a different procedure than that used for the Hadamard where we appealed to the result given in Eq.~\ref{eq:hadhamiltonian} for quantum cat maps. However, as the phase-shift gate is a quantum cat map as well, we could have obtained Eq.~\ref{eq:phaseshifthamiltonian} in this manner. Similarly, the approach we used to find the phase-shift Hamiltonian by discretizing time in Eq.~\ref{eq:harmonicevol} would work for the Hadamard gate but it is a bit more involved since the latter contains both \(p\)- and \(q\)-evolution. Nevertheless, this produces Eq.~\ref{eq:hadhamiltonian} as well. We presented both techniques for illustrative purposes.

\subsection{Controlled-Not Gate}

Lastly, the controlled-not gate can be generalized to \(d\)-dimensions~\cite{Gottesman99} by
\begin{equation}
  \hat{C} = \sum_{j,k \in \mathbb{Z}/d \mathbb{Z}} \ketbra{j,k \oplus j}{j,k}.
\end{equation}
It is clear that this translates the \(q\)-state of the second qudit by the \(q\)-state of the first qudit. As a result, as is evident by examining the gate's action on stabilizer states, the first qudit experiences an ``equal and opposite reaction'' force that kicks its momentum by the \(q\)-state of the second qudit. This is the phase space picture of the well-known fact that a CNOT examined in the \(\hat {X}\) basis has the control and target reversed with respect to the \(\hat Z\) basis. This can also seen by looking at its effect in the momentum (\(\hat {X}\)) basis:
\begin{equation}
  {\hat {F}}^\dagger \hat{C} \hat {F} = \sum_{j,k \in \mathbb{Z}/d \mathbb{Z}} \ketbra{j \ominus k,k}{j,k}.
\end{equation}
As a result, this gate is described by the map:
\begin{equation}
\bs{\mathcal M}_{\hat {C}} = \left( \begin{array}{ccccc} 1 & -1 & 0 & 0\\ 0 & 1 & 0 & 0\\ 0 & 0 & 1 & 0\\ 0 & 0 & 1 & 1\end{array}\right),
\end{equation}
and \(\bs \alpha_{\hat {C}} = (0,0,0,0)\).
This corresponds to
\begin{equation}
\bs{\mathcal B}_{\hat {C}} = \left( \begin{array}{ccccc} 0 & 0 & 0 & 0\\ 0 & 0 & -\frac{1}{2} & 0\\ 0 & -\frac{1}{2} & 0 & 0\\ 0 & 0 & 0 & 0\end{array}\right).
\end{equation}
Hence its center generating function \(S_{\hat {C}}(\bs x_p, \bs x_q) = -x_{p_2} x_{q_1}\). Again, this corresponds with the argument of the center representation of the controlled-not gate, which can be found to be:
\begin{equation}
{C}_x(\bs x_p, \bs x_q) = e^{-\frac{2 \pi i}{d} x_{q_1} x_{p_2}}.
\end{equation}
Therefore, this gate can be seen to be a bilinear \(p\)-\(q\) coupling between two qudits and corresponds to the Hamiltonian
\begin{equation}
  H_{\hat {C}} = p_1 q_2,
\end{equation}
as can be found from Eq.~\ref{eq:harmonicevol} again.

As a result, it is now clear that all the Clifford group gates have Hamiltonians that are harmonic and that take Weyl phase space points to themselves. Therefore, their propagation can be fully described by a truncation of the semiclassical propagator Eq.~\ref{eq:discretecenterrepvVMG} to order \(\hbar^0\) as in Eq.~\ref{eq:discretesemiproplowestorder} and they are manifestly classical in this sense.

To summarize the results of this section, using Eq.~\ref{eq:discretesupofreflections}, the Hadamard, phase shift, and CNOT gates can be written as:
\begin{equation}
  \hat{F} = d^{-2} \sum_{\substack{x_p,x_q, \\ \xi_p,\xi_q \in\\ \mathbb{Z}/d \mathbb{Z}}} e^{ -\frac{2 \pi i}{d} \left[ -(x_p^2 + x_q^2) - d \left(x_p \xi_p - x_q \xi_q \right) \right] } \hat {Z}^{\xi_p} \hat {X}^{\xi_q},
\end{equation}
\begin{equation}
  \hat{P} = d^{-2} \sum_{\substack{x_p,x_q, \\ \xi_p,\xi_q \in\\ \mathbb{Z}/d \mathbb{Z}}} e^{ -\frac{2 \pi i}{d} \left[ \frac{1}{2} (x_q - x_q^2) - d \left(x_p \xi_q - x_q \xi_p \right) \right] } \hat {Z}^{\xi_p} \hat {X}^{\xi_q},
\end{equation}
and
\begin{eqnarray}
  &&\hat{C} = \\
  && d^{-4} \sum_{\substack{\bs x_p, \bs x_q, \\ \bs \xi_p, \bs \xi_q \in \\ (\mathbb{Z}/d \mathbb{Z})^{ 2}}} e^{ -\frac{2 \pi i}{d} \left[ x_{q_1} x_{p_2} - d \left(\bs x_p \cdot \bs \xi_q - \bs x_q \cdot \bs \xi_p \right) \right] } \hat {Z}^{ \bs \xi_p} \hat {X}^{ \bs \xi_q}, \nonumber
\end{eqnarray}
(up to a phase). This form emphasizes their quadratic nature.

As for the continuous case, there exists a particularly simple way in discrete systems to see to what order in \(\hbar\) the path integral must be kept to handle unitary propagation beyond the Clifford group. We describe this in the next section.

We note that the center generating actions \(S(\bs x_p, \bs x_q)\) found here are related to the \(G(\bs q', \bs q)\) found by Penney \emph{et al}.~\cite{Penney16}, which are in terms of initial and final positions, by symmetrized Legendre transform~\cite{Almeida98}:
\begin{equation}
  \label{eq:actionfromsymmetrizedlegendretransform}
  G(\bs q', \bs q, t) = F\left(\frac{\bs q' + \bs q}{2}, \bs p(\bs q' - \bs q)\right),
\end{equation}
where the canonical generating function
\begin{equation}
  F\left(\frac{\bs q' + \bs q}{2}, \bs p\right) = S\left( \bs x_p = \bs p, \bs x_q = \frac{\bs q' + \bs q}{2} \right) + \bs p \cdot \left( \bs q' - \bs q \right),
\end{equation}
for \(\bs p (\bs q' - \bs q)\) given implicitly by \(\frac{\partial F}{\partial \bs p} = 0\).

Applying this to the actions we found reveals that
\begin{equation}
  G_{\hat {F}}(q', q, t) = q' q,
\end{equation}
\begin{equation}
  G_{\hat {P}}(q', q, t) = \frac{d+1}{2} (q^2 - q),
\end{equation}
and
\begin{equation}
  G_{\hat {C}}((q'_1, q'_2), (q_1, q_2), t) = 0,
\end{equation}
which is in agreement with~\cite{Penney16}.

\subsection{Classicality of Stabilizer States}

In this subsection we show that stabilizer states evolve to stabilizer states under Clifford gates, and that it is possible to describe this evolution classically. For odd \(d \ge 3\) the positivity of the Wigner representation implies that evolution of stabilizer states is non-contextual, and so here we are investigating in detail what this means in our semiclassical picture.

To begin, it is instructive to see the form stabilizer states take in the discrete position representation and in the center representation. Gross proved that~\cite{Gross06}:
\begin{theorem} \label{thm:stabstates}Let \(d\) be odd and \(\Psi \in L^2((\mathbb Z/d\mathbb Z)^n)\) be a state vector. The Wigner function of \(\Psi\) is non-negative if and only if \(\Psi\) is a stabilizer state. \end{theorem}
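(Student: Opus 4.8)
I would prove the two implications by quite different routes, and I expect the direction ``positive Wigner function $\Rightarrow$ stabilizer'' to contain essentially all of the difficulty. For the easy direction --- stabilizer $\Rightarrow$ non-negative Wigner function --- I would not compute anything new but instead build on the structure already established here. Every stabilizer state is $\hat C\ket{0}^{\otimes n}$ for some Clifford unitary $\hat C$, since the Clifford group acts transitively on stabilizer states for odd $d$~\cite{Gross06,Gottesman99}; and every Clifford gate has, as shown above, a harmonic center-generating function taking Weyl phase space to itself, so the displacement rule recorded at the end of Section~\ref{sec:discretesemi} applies verbatim: $\hat C$ acts on Wigner functions by the invertible affine relabelling $\bs x\mapsto\bs{\mathcal M}_{\hat C}(\bs x+\bs\alpha_{\hat C}/2)+\bs\alpha_{\hat C}/2$ of the $d^{2n}$ phase-space points, which obviously preserves positivity. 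By Eq.~\ref{eq:weylfunctionpurestatediscrete} the Wigner function of $\ket{0}^{\otimes n}$ equals $d^{-n}$ on $\{\bs x:\bs x_q=\bs 0\}$ and $0$ elsewhere, hence is non-negative, and therefore so is $W_{\hat C\ket{0}^{\otimes n}}$. (Alternatively one can evaluate the Wigner function of a stabilizer projector $\rho=d^{-n}\sum_{\hat g\in S}\hat g$ directly, as a character sum over the Lagrangian subgroup carrying $S$, which collapses to $d^{-n}$ times the indicator of an $n$-dimensional affine subspace.)

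For the converse I would first pin down the shape that any non-negative $W_\Psi$ of a pure state must have. I would use three standard properties of the discrete center representation for odd $d$~\cite{Wootters87,Rivas99}: $W_\Psi$ is real with $\sum_{\bs x}W_\Psi(\bs x)=1$; the pointwise bound $|W_\Psi(\bs x)|\le d^{-n}$, which holds because $W_\Psi(\bs x)=d^{-n}\expval{\Psi}{\hat R(\bs x)}{\Psi}$ with $\hat R(\bs x)$ a Hermitian involution; and the purity identity $\sum_{\bs x}W_\Psi(\bs x)^2=d^{-n}$. If $W_\Psi\ge 0$, then $\sum_{\bs x}W_\Psi(\bs x)\bigl(d^{-n}-W_\Psi(\bs x)\bigr)=d^{-n}-d^{-n}=0$ is a sum of non-negative terms, so every term vanishes and $W_\Psi(\bs x)\in\{0,d^{-n}\}$ for all $\bs x$; combined with the normalization this forces $W_\Psi$ to equal $d^{-n}$ on a set $A$ with $|A|=d^n$ and $0$ off it.

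I would then upgrade this flat support to a group-theoretic statement. For $\bs x\in A$, the value $\expval{\Psi}{\hat R(\bs x)}{\Psi}=1$ together with $\hat R(\bs x)^2=\hat{\mathbb I}$ force $\hat R(\bs x)\ket\Psi=\ket\Psi$. Since the composition of two discrete reflections is a phase times a translation, $\hat R(\bs x)\hat R(\bs y)\propto\hat T(2(\bs x-\bs y))$ (the factor $2$ being invertible because $d$ is odd), the state $\ket\Psi$ is a phase-eigenvector of $\hat T(\bs\eta)$ for every $\bs\eta$ in $2(A-A)$, and hence for every $\bs\eta$ in the subgroup $V\subseteq(\mathbb Z/d\mathbb Z)^{2n}$ it generates. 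Evaluating $\hat T(\bs\eta)\hat T(\bs\eta')\ket\Psi$ in both orders via the Weyl relation forces the symplectic form to vanish on $V$, so $|V|\le d^n$; but $A$ lies inside a single coset $\bs a_0+V$, so $d^n=|A|\le|V|\le d^n$, whence $V$ is a Lagrangian subgroup and $A=\bs a_0+V$. Because $d$ is odd, the associated translation operators --- phase-corrected so that each of them fixes $\ket\Psi$ --- mutually commute, have order dividing $d$, and thus generate an abelian group of size $d^n$ containing no $-\hat{\mathbb I}$: a stabilizer group, of which $\ket\Psi$ is the unique common $+1$-eigenvector. So $\ket\Psi$ is a stabilizer state.

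The hard part is the last step: turning the purely combinatorial fact ``$W_\Psi$ is flat on a $d^n$-element set'' into the algebraic conclusion ``that set is a coset of a Lagrangian subgroup.'' Everything there hinges on the clean composition laws of reflections and translations available for odd $d$ and on the count $|A|=|V|=d^n$ being exactly tight. The oddness of $d$ enters in three essential and distinct ways: it makes the reflection operators (hence the whole center representation) well defined; it makes $2$ invertible, which is what lets $\hat R\hat R$ be a translation by $2(\,\cdot\,)$; and it keeps $-\hat{\mathbb I}$ out of the Pauli group generated in the last step. For $d$ odd but composite one would also want to note, prime-power by prime-power via the Chinese remainder theorem, that a Lagrangian subgroup of the symplectic module $(\mathbb Z/d\mathbb Z)^{2n}$ still has order $d^n$ and still defines a stabilizer code with one-dimensional code space.
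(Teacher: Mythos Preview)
The paper does not prove this theorem: it attributes the result to Gross~\cite{Gross06} and states it without argument, then moves on to the corollaries it needs. Your proposal therefore supplies a proof where the paper simply cites one.

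Your sketch is correct and is essentially Gross's own argument. The easy direction via Clifford covariance of the Wigner function is clean and matches how the paper itself reasons about stabilizer Wigner functions just below (Eq.~\ref{eq:wignerfnofstabstateoddd} and Theorem~\ref{thm:wigfnofstabstates}). For the hard direction, the chain you outline --- the pointwise bound $|W_\Psi|\le d^{-n}$ plus purity forcing $W_\Psi\in\{0,d^{-n}\}$ on exactly $d^n$ points, then $\hat R(\bs x)\ket\Psi=\ket\Psi$ on the support, then $\hat R(\bs x)\hat R(\bs y)\propto\hat T\bigl(2(\bs x-\bs y)\bigr)$ producing an isotropic subgroup $V$, then the squeeze $d^n=|A|\le|V|\le d^n$ forcing $V$ Lagrangian and $A$ a Lagrangian coset --- is precisely the structure of the proof in~\cite{Gross06}. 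Your closing caveat about composite odd $d$ and the Chinese remainder theorem is well placed: the isotropy bound $|V|\le d^n$ over $\mathbb Z/d\mathbb Z$ is exactly where the non-prime case needs a line of justification beyond the field argument.
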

Gross also proved~\cite{Gross06} 
\begin{corollary}
  \label{cor:stabstates}
  Given that \(\Psi(\bs q) \ne 0\) \(\forall\, \bs q\), a vector \(\Psi\) is a stabilizer state if and only if it is of the form
\begin{equation}
  \Psi_{\theta_\beta,\eta_\beta}(\bs q) \propto \exp \left[\frac{2 \pi i}{d} \left( \bs q^T \bs \theta_\beta \bs q + \bs \eta_\beta \cdot \bs q \right) \right].
  \label{eq:stabstateposrep}
\end{equation}
where \(\bs \theta_\beta \in \left(\mathbb{Z}/d\mathbb{Z}\right)^{n\times n}\)  and \(\bs q, \bs \eta_\beta \in \left(\mathbb{Z}/d\mathbb{Z}\right)^n\).
\end{corollary}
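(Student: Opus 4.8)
The plan is to understand the discrete Wigner function attached to the ansatz in Eq.~\ref{eq:stabstateposrep} and then invoke Gross's Hudson theorem, Theorem~\ref{thm:stabstates}, to trade ``non-negative Wigner function'' for ``stabilizer state'' in both directions. For the forward implication I would substitute Eq.~\ref{eq:stabstateposrep} into the pure-state Wigner formula Eq.~\ref{eq:weylfunctionpurestatediscrete}. Because $d$ is odd, $2$ is a unit of $\mathbb{Z}/d\mathbb{Z}$ with $2^{-1}=\tfrac{d+1}{2}\in\mathbb{Z}$, so the substitution $\bs y=\tfrac{d+1}{2}\bs\xi_q$ is a bijection of $(\mathbb{Z}/d\mathbb{Z})^n$; after also replacing $\bs\theta_\beta$ by the symmetric matrix $\tfrac{d+1}{2}(\bs\theta_\beta+\bs\theta_\beta^T)$, which defines the same state since $\bs q^T\bs\theta_\beta\bs q=\bs q^T\bs\theta_\beta^T\bs q$, the quadratic pieces telescope and the Wigner sum collapses to $d^{-n}\sum_{\bs y}e^{\frac{2\pi i}{d}\,2\bs y\cdot(2\bs\theta_\beta\bs x_q+\bs\eta_\beta-\bs x_p)}$. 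This geometric sum is a positive multiple of the indicator of the affine set $\{\bs x_p=2\bs\theta_\beta\bs x_q+\bs\eta_\beta\}$, hence $\Psi_x\ge 0$, and Theorem~\ref{thm:stabstates} then forces $\Psi$ to be a stabilizer state. (Alternatively one can avoid Theorem~\ref{thm:stabstates} here by reading off, from the ratio $\Psi(\bs q\oplus\bs\xi_q)/\Psi(\bs q)$, the $n$ commuting translation operators $\hat{T}$ along that affine set which fix $\Psi$.)

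For the converse I would take $\Psi$ a stabilizer state with $\Psi(\bs q)\ne 0$ for every $\bs q$. Theorem~\ref{thm:stabstates} gives $\Psi_x\ge 0$, and the Wigner function of a stabilizer state is (as in Gross's analysis~\cite{Gross06}) the uniform distribution on an affine Lagrangian coset $A\subset(\mathbb{Z}/d\mathbb{Z})^{2n}$ of size $d^n$. Summing Eq.~\ref{eq:weylfunctionpurestatediscrete} over $\bs x_p$ shows the position marginal of $\Psi_x$ is $|\Psi(\bs x_q)|^2$, which is everywhere positive, so $A$ surjects onto the $n$ position coordinates; a surjection between finite sets of equal cardinality is a bijection, so $A$ is the graph of an affine map $\bs x_q\mapsto\bs x_p=\bs S\bs x_q+\bs v$, and isotropy of $A$ forces $\bs S=\bs S^T$ mod $d$. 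Since the map $\hat A\mapsto A_x$ of Eq.~\ref{eq:weylfunction} is a linear bijection, $\Psi_x$ determines $\ketbra{\Psi}{\Psi}$ and hence $\Psi$ up to a global phase; comparing with the forward computation for $\bs\theta_\beta=\tfrac{d+1}{2}\bs S$ and $\bs\eta_\beta=\bs v$ identifies $\Psi$ with Eq.~\ref{eq:stabstateposrep}, and in particular $\bs\theta_\beta,\bs\eta_\beta$ have entries in $\mathbb{Z}/d\mathbb{Z}$.

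The hard part is not conceptual but the phase bookkeeping forced by the ``half-integer'' data of the discrete Weyl formalism: the translation operator carries the factor $e^{-i\pi\bs\xi_p\cdot\bs\xi_q/d}$ and the Wigner kernel the factor $\tfrac{d+1}{2}$, and one must verify at every step that each exponent can be rewritten with a $\tfrac{2\pi i}{d}$ prefactor and genuinely integer matrices and vectors. This is exactly where oddness of $d$ is indispensable, and it is what guarantees that $\bs\theta_\beta$ is an honest $\mathbb{Z}/d\mathbb{Z}$ matrix rather than a half-integer one. If one prefers a self-contained converse in place of citing Gross's structure result, the alternative is to argue directly that the maximal isotropic subgroup $M$ stabilizing $\Psi$ contains no nonzero pure-boost element $(\bs\xi_p,\bs 0)$ --- otherwise $\hat{Z}^{\bs\xi_p}$ would be a stabilizer diagonal in the position basis with non-constant spectrum, forcing some $\Psi(\bs q)=0$ --- so that $M=\{(\bs\Theta\bs\xi_q,\bs\xi_q)\}$ with $\bs\Theta$ symmetric, and then to solve the resulting eigenvalue recursion for $\Psi(\bs q)$ directly, which reproduces the quadratic-exponential form at the cost of the same bookkeeping.
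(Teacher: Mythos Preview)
The paper does not supply its own proof of Corollary~\ref{cor:stabstates}: it simply attributes the result to Gross and cites~\cite{Gross06}. So there is no ``paper's proof'' to compare against beyond that citation; what you have written is essentially a reconstruction of Gross's argument, and it is correct in both directions.

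A few remarks on the match. Your forward computation---substituting Eq.~\ref{eq:stabstateposrep} into Eq.~\ref{eq:weylfunctionpurestatediscrete}, using that $\tfrac{d+1}{2}=2^{-1}$ in $\mathbb{Z}/d\mathbb{Z}$, and collapsing the sum to the indicator of $\{\bs x_p=2\bs\theta_\beta\bs x_q+\bs\eta_\beta\}$---is precisely what the paper does in the displayed equation immediately following the corollary, where it obtains $\delta_{\bs\eta_\beta-\bs x_p+2\bs\theta_\beta\bs x_q}$. The paper then invokes Theorem~\ref{thm:stabstates} only implicitly, since it is treating the corollary as already established by Gross. Your converse, reading off the graph structure of the supporting Lagrangian coset from the full-support hypothesis and then matching against the forward computation, is the standard route in~\cite{Gross06}; your alternative ``no pure boost in the stabilizer'' argument is also Gross's, and is in fact how he proves the maximally-supported case directly. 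Either is acceptable here. The bookkeeping caveat you flag about half-integers and oddness of $d$ is exactly the right thing to worry about, and your handling of it (absorbing symmetrization of $\bs\theta_\beta$ via $2^{-1}$) is sound.
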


Applying Eq.~\ref{eq:weylfunctionpurestatediscrete} to Eq.~\ref{eq:stabstateposrep}, the Wigner function of such maximally supported stabilizer states can be found to be:
\begin{eqnarray}
  && {\Psi_{\theta_\beta,\eta_\beta}}_x(\bs x_p, \bs x_q) \propto\\
  && d^{-n} \sum_{\substack{\bs \xi_q \in\\\left(\mathbb{Z}/d\mathbb{Z}\right)^{ n}}} \exp \left[\frac{2 \pi i}{d} \bs \xi_q \cdot \left( \bs \eta_\beta - \bs x_p + 2 \bs \theta_\beta \bs x_q \right)\right].\nonumber
\end{eqnarray}
Therefore, one finds that the Wigner function is the discrete Fourier sum equal to \(\delta_{\bs \eta_\beta - \bs x_p + 2\bs \theta_\beta \bs x_q}\). For \(\bs \theta_\beta = 0\) the state is a momentum state at \(\bs x_p\). Finite \(\bs \theta_\beta\) rotates that momentum state in phase space in ``steps'' such that it always lies along the discrete Weyl phase space points \((\bs x_p, \bs x_q) \in (\mathbb{Z}/d\mathbb{Z})^{2n}\).

This Gaussian expression only captures stabilizer states that are maximally supported in \(q\)-space. One may wonder what the stabilizer states that aren't maximally supported in \(q\)-space look like in Weyl phase space. Of course, it is possible that some may be maximally supported in \(p\)-space and so can be captured by the following corollary:
\begin{corollary}
  \label{cor:stabstates2}
  If \(\Psi(p) \ne 0\) for all \(p\)'s then there exists a \(\bs \theta_{\beta p} \in \left(\mathbb{Z}/d\mathbb{Z}\right)^{n\times n}\)  and an \(\bs \eta_{\beta p} \in \left(\mathbb{Z}/d\mathbb{Z}\right)^n\) such that
\begin{equation}
  \Psi_{\theta_{\beta p},\eta_{\beta p}}(\bs p) \propto \exp \left[\frac{2 \pi i}{d} \left( \bs p^T \bs \theta_{\beta p} \bs p + \bs \eta_{\beta p} \cdot \bs p \right) \right].
  \label{eq:stabstatemomrep}
\end{equation}
\end{corollary}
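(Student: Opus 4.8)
The plan is to deduce this corollary from Corollary~\ref{cor:stabstates} by conjugating with the discrete Fourier transform \(\hat F\) of Eq.~\ref{eq:hadamard}, which is a generator of the Clifford group and hence maps stabilizer states to stabilizer states. The idea is simply that the Fourier transform interchanges the position and momentum representations, so that the hypothesis ``\(\Psi\) is everywhere nonzero in the momentum representation'' becomes ``\(\hat F\Psi\) is everywhere nonzero in the position representation,'' which is exactly the hypothesis of Corollary~\ref{cor:stabstates}.

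First I would record the elementary fact that the momentum eigenstates are \(\ket{p_n} = \hat F^\dagger\ket{n}\): since \(\hat p = \hat F^\dagger \hat q \hat F\), one has \(\hat p\,\hat F^\dagger\ket{n} = \hat F^\dagger \hat q\ket{n} = n\,\hat F^\dagger\ket{n}\). Consequently the momentum-representation wavefunction of \(\Psi\) satisfies \(\Psi(\bs p = \bs n) = \braket{p_n}{\Psi} = \bra{\bs n}\hat F\ket{\Psi} = (\hat F\Psi)(\bs n)\), i.e. it coincides with the \emph{position}-representation wavefunction of the state \(\hat F\Psi\). Up to the usual sign/convention ambiguities in \((\mathbb Z/d\mathbb Z)\), this is just the statement that \(\hat F\) swaps the two representations.

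Next, since \(\Psi\) is a stabilizer state and \(\hat F\) is Clifford, \(\hat F\Psi\) is again a stabilizer state; and by hypothesis its position wavefunction \((\hat F\Psi)(\bs q) = \Psi(\bs p = \bs q)\) is nonzero for every \(\bs q\). I would then apply Corollary~\ref{cor:stabstates} to \(\hat F\Psi\), obtaining \(\bs\theta_\beta, \bs\eta_\beta\) with \((\hat F\Psi)(\bs q) \propto \exp\!\left[\tfrac{2\pi i}{d}(\bs q^T\bs\theta_\beta\bs q + \bs\eta_\beta\cdot\bs q)\right]\). Relabelling \(\bs q\to\bs p\) and setting \(\bs\theta_{\beta p} = \bs\theta_\beta\), \(\bs\eta_{\beta p} = \bs\eta_\beta\) (possibly absorbing a sign coming from the \(\hat F\) versus \(\hat F^\dagger\) convention) yields Eq.~\ref{eq:stabstatemomrep}.

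The only real obstacle is bookkeeping: making the identification \((\hat F\Psi)(\bs q) = \Psi(\bs p = \bs q)\) precise with the paper's conventions for \(\hat F\), \(\hat p\), and mod-\(d\) addition, and tracking how the quadratic form \(\bs\theta_\beta\) transforms under this relabelling. It is worth emphasising why the more naive route --- directly Fourier-transforming the position-space Gaussian of Eq.~\ref{eq:stabstateposrep} --- does not suffice on its own: a stabilizer state that is maximally supported in \(\bs p\) need not be maximally supported in \(\bs q\), so Corollary~\ref{cor:stabstates} cannot be invoked for \(\Psi\) itself; passing through \(\hat F\Psi\) is precisely what circumvents this, at the cost of invoking Clifford-covariance of the stabilizer states.
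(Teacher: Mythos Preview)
Your argument is correct. The paper's own proof is a one-line remark: redo Gross's argument verbatim in the discrete \(p\)-basis rather than the \(q\)-basis. Your route is genuinely different in that, instead of reproving Corollary~\ref{cor:stabstates} from scratch in the conjugate basis, you reduce to it by pushing \(\Psi\) through the Clifford element \(\hat F\) and invoking the fact that Clifford gates preserve the set of stabilizer states. What you gain is economy and a cleaner logical dependency (you use Corollary~\ref{cor:stabstates} as a black box rather than reopening Gross's proof); what the paper's approach gains is that it does not need the Clifford-covariance of stabilizer states as an input, since Gross's characterization is proved directly from the positivity of the Wigner function and works identically in either conjugate representation. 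Your closing remark about why one cannot simply Fourier-transform the \(q\)-space Gaussian of Eq.~\ref{eq:stabstateposrep} is well taken and worth keeping: full support in \(\bs p\) does not imply full support in \(\bs q\), so Corollary~\ref{cor:stabstates} is not directly applicable to \(\Psi\) itself.
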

\begin{proof}
  This can be shown following the same methods employed by Gross~\cite{Gross06} but in the discrete \(p\)-basis.
\end{proof}

Unfortunately, it is easy to show that Corollary~\ref{cor:stabstates} and~\ref{cor:stabstates2} do not provide an expression for all stabilizer states (except for the odd prime \(d\) case, as we shall see shortly) as there exist stabilizer states for odd non-prime \(d\) that are not maximally supported in \(p\)- or \(q\)-space or any finite rotation between those two. To find an expression that encompasses all stabilizer states, we must turn to the Wigner function of stabilizer states.

An equivalent definition of stabilizer states on \(n\) qudits is given by states \(\hat {V} \underbrace{\ket{0} \otimes \cdots \otimes \ket{0}}_{n}\) where \(\hat {V}\) is a quantum circuit consisting of Clifford gates. We know that the Clifford circuits are generated by the \(\hat {P}\), \(\hat {F}\) and \(\hat {C}\) gates, and that the Wigner functions \(\Psi_x(\bs x)\) of stabilizer states propagate under \(\hat {V}\) as \(\Psi_x(\bs{\mathcal M}_{\hat {V}} (\bs x + \bs \alpha_{\hat {V}}/2) + \bs \alpha_{\hat {V}}/2)\), it follows that the Wigner function of stabilizer states is:
\begin{equation}
  \delta_{\bs \Phi_0 \cdot \bs{\mathcal M}_{\hat {V}} \cdot \bs x, \bs r_0},
  \label{eq:wignerfnofstabstateoddd}
\end{equation}
where \(\bs \Phi_0 = \left(\begin{array}{cc} 0 & 0\\ 0 & \mathbb{I}_n\end{array} \right)\) and \(\bs r_0 = (\bs 0, \bs 0)\). We have therefore proved the next theorem:
  \begin{theorem}
    \label{thm:wigfnofstabstates}
    The Wigner function \(\Psi_x(\bs x)\) of a stabilizer state for any odd \(d\) and \(n\) qudits is \(\delta_{\bs \Phi \cdot \bs x, \bs r}\) for \(2n \times 2n\) matrix \(\bs \Phi\) and \(2n\) vector \(\bs r\).
  \end{theorem}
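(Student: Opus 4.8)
The plan is to build up from the three generators $\hat F$, $\hat P$, $\hat C$, whose center representations and associated symplectic data $(\bs{\mathcal M}, \bs\alpha)$ have already been computed in this section, and to show that the structural form $\delta_{\bs\Phi\cdot\bs x,\bs r}$ is preserved under composition. First I would recall that an arbitrary stabilizer state on $n$ qudits can be written $\hat V\ket{0}^{\otimes n}$ with $\hat V$ a product of Clifford generators, and that $\ket{0}^{\otimes n}$ is a momentum eigenstate at the origin, whose Wigner function is $\delta_{\bs x_p,\bs 0}$; this is exactly of the form $\delta_{\bs\Phi_0\cdot\bs x,\bs r_0}$ with $\bs\Phi_0 = \left(\begin{array}{cc}0&0\\0&\mathbb{I}_n\end{array}\right)$ and $\bs r_0 = \bs 0$. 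Wait---I should be careful about the $p$/$q$ convention; in either case the relevant $\bs\Phi_0$ is a fixed projector and $\bs r_0$ is fixed. Then, using the propagation rule established earlier in the excerpt---that under a gate $\hat V$ with harmonic Hamiltonian the Wigner function transforms as $\Psi_x(\bs x)\mapsto\Psi_x(\bs{\mathcal M}_{\hat V}(\bs x+\bs\alpha_{\hat V}/2)+\bs\alpha_{\hat V}/2)$---I would track what happens to a delta function under such an affine symplectic map.

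The key computational step is the observation that composing this affine map with $\delta_{\bs\Phi\cdot\bs x,\bs r}$ again yields a delta function supported on an affine subspace. Concretely, if the Wigner function before applying $\hat V$ is $\delta_{\bs\Phi\cdot\bs x,\bs r}$, then afterwards it is $\delta_{\bs\Phi\cdot(\bs{\mathcal M}_{\hat V}(\bs x+\bs\alpha_{\hat V}/2)+\bs\alpha_{\hat V}/2),\bs r} = \delta_{\bs\Phi',\bs x,\bs r'}$ with $\bs\Phi' = \bs\Phi\bs{\mathcal M}_{\hat V}$ and $\bs r' = \bs r - \bs\Phi(\mathbb{I}+\bs{\mathcal M}_{\hat V})\bs\alpha_{\hat V}/2$ (all arithmetic mod $d$). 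Since each generator's $\bs{\mathcal M}$ has integer entries and each $\bs\alpha$ has entries in $\tfrac{1}{2}\mathbb{Z}$, and since $d$ is odd so that $2$ is invertible mod $d$, every quantity here is a well-defined element of $\mathbb{Z}/d\mathbb{Z}$. By induction on the number of generators in $\hat V$, the Wigner function is always of the form $\delta_{\bs\Phi\cdot\bs x,\bs r}$, and in fact $\bs\Phi = \bs\Phi_0\bs{\mathcal M}_{\hat V}$ with $\bs{\mathcal M}_{\hat V}$ the product of the generator symplectic matrices, recovering Eq.~\ref{eq:wignerfnofstabstateoddd} precisely.

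The main obstacle I anticipate is twofold, and both parts are more bookkeeping than deep. One is making sure the $\bs\alpha$-dependent translation part is handled consistently: because the generators' base points are displaced (the phase-shift gate shears about $-\tfrac12$), the $\bs r$ vector genuinely shifts, so the claim that $\bs r$ can be taken to be a fixed $\bs 0$ requires checking that $\bs\Phi(\mathbb{I}+\bs{\mathcal M})\bs\alpha/2$ lands in the kernel-complement appropriately---or, more simply, one just allows $\bs r$ to be a general vector in the theorem statement, which is exactly what is stated ("$2n$ vector $\bs r$"), so this is not actually an obstruction. The other is that the delta function is a function on the discrete torus $(\mathbb{Z}/d\mathbb{Z})^{2n}$, and one should verify that the rank of $\bs\Phi$ over $\mathbb{Z}/d\mathbb{Z}$ behaves correctly under multiplication by the symplectic (hence invertible mod $d$, since $\det=1$) matrices $\bs{\mathcal M}_{\hat V}$---invertibility guarantees $\mathrm{rank}(\bs\Phi_0\bs{\mathcal M}_{\hat V}) = \mathrm{rank}(\bs\Phi_0) = n$, so the support is always an affine subspace of the expected size and normalization is preserved. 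None of this needs the detailed Hamiltonian forms; only the affine-symplectic transformation law and integrality of $(\bs{\mathcal M},\bs\alpha)$ for the generators, all already in hand, are required. \qed
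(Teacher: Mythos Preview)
Your proposal is correct and follows essentially the same route as the paper: start from the Wigner function of \(\ket{0}^{\otimes n}\), which is already of the form \(\delta_{\bs\Phi_0\cdot\bs x,\bs r_0}\), and then use the affine-symplectic covariance rule \(\Psi_x(\bs x)\mapsto\Psi_x(\bs{\mathcal M}_{\hat V}(\bs x+\bs\alpha_{\hat V}/2)+\bs\alpha_{\hat V}/2)\) for Clifford \(\hat V\) to conclude that the delta-on-affine-subspace form is preserved. If anything you are more careful than the paper, which simply writes \(\bs r_0=(\bs 0,\bs 0)\) in Eq.~\ref{eq:wignerfnofstabstateoddd} without tracking the \(\bs\alpha\)-induced shift in \(\bs r\); your explicit update \(\bs r' = \bs r - \bs\Phi(\mathbb{I}+\bs{\mathcal M}_{\hat V})\bs\alpha_{\hat V}/2\) and the remark that \(2\) is invertible mod odd \(d\) fill that small gap.
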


As an aside, Theorem~\ref{thm:wigfnofstabstates} allows us to develop an all-encompassing Gaussian expression for stabilizer states for the restricted case that \(d\) is odd prime. In this case, the following Corollary shows that a ``mixed'' representation is always possible: where each degree of freedom is expressed in either the \(p\)- or \(q\)-basis:
\begin{corollary}
  \label{cor:stabstatemixedrep}
For odd prime \(d\), if \(\Psi\) is a stabilizer state for \(n\) qudits, then there always exists a mixed representation in position and momentum such that:
\begin{equation}
  \Psi_{\theta_{\beta \bs x},\eta_{\beta \bs x}}(\bs x) = \frac{1}{\sqrt{d}} \exp \left[\frac{2 \pi i}{d} \left( \bs x^T \bs \theta_{\beta \bs x} \bs x + \bs \eta_{\beta \bs x} \cdot \bs x \right) \right],
  \label{eq:stabstatemixedrep}
\end{equation}
where \(x_i\) can be either \(p_i\) or \(q_i\).
\end{corollary}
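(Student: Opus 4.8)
The plan is to reduce the statement to the all-position case of Corollary~\ref{cor:stabstates} by conjugating $\Psi$ with a product of single-qudit Hadamards, the whole point being that for odd \emph{prime} $d$ the residue ring $\mathbb{Z}/d\mathbb{Z}$ is a field $\mathbb{F}_d$. First I would use Theorem~\ref{thm:wigfnofstabstates} and the construction preceding it: the Wigner function of $\Psi$ is $\delta_{\bs\Phi\cdot\bs x,\bs r}$ with $\bs\Phi=\bs\Phi_0\bs{\mathcal M}_{\hat V}$, where $\hat V$ is a Clifford circuit preparing $\Psi$ from $\ket0^{\otimes n}$ and $\bs\Phi_0=\mathrm{diag}(\bs0,\mathbb{I}_n)$. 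Since $\bs\Phi_0$ has rank $n$ and $\bs{\mathcal M}_{\hat V}$ is symplectic hence invertible, the support $M=\{\bs x:\bs\Phi\bs x=\bs r\}$ is an affine $n$-plane whose linear part $V_0=\ker\bs\Phi=\bs{\mathcal M}_{\hat V}^{-1}\,\mathrm{span}\{\bs e_{p_1},\dots,\bs e_{p_n}\}$ is a Lagrangian subspace of $\mathbb{F}_d^{2n}$ (a symplectic image of the coordinate Lagrangian spanned by the $\bs e_{p_i}$).

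The crux is then the linear-algebra fact that, over the field $\mathbb{F}_d$, every Lagrangian $V_0\subset\mathbb{F}_d^{2n}$ is transverse to at least one of the $2^n$ coordinate Lagrangians $W_S:=\mathrm{span}\big(\{\bs e_{p_i}\}_{i\in S}\cup\{\bs e_{q_i}\}_{i\notin S}\big)$, i.e.\ $V_0\cap W_S=\{\bs0\}$ for some $S\subseteq\{1,\dots,n\}$. Granting this, $M$ is the graph of an affine map from the complementary coordinate plane $W_{[n]\setminus S}$ (coordinates $\{q_i\}_{i\in S}\cup\{p_i\}_{i\notin S}$) into $W_S$, so $\Psi$ is nowhere zero in the mixed basis that uses the $\hat Z_i$-eigenbasis on qudits $i\in S$ and the $\hat X_i$-eigenbasis on qudits $i\notin S$. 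Applying $\hat F$ on every qudit $i\notin S$ produces another stabilizer state $\Psi'=\big(\bigotimes_{i\notin S}\hat F_i\big)\Psi$ (Clifford gates map stabilizer states to stabilizer states, and on phase space $\hat F_i$ exchanges $p_i\leftrightarrow q_i$ up to a sign, by Eq.~\ref{eq:stabmathad} and the propagation rule of~\cite{Rivas99}); by construction the Wigner support of $\Psi'$ is a graph over all of $\bs x_q$, i.e.\ $\Psi'(\bs q)\neq0$ for every $\bs q$.

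Corollary~\ref{cor:stabstates} then gives $\Psi'(\bs q)\propto\exp\big[\tfrac{2\pi i}{d}(\bs q^{T}\bs\theta\bs q+\bs\eta\cdot\bs q)\big]$. Since $\Psi=\big(\bigotimes_{i\notin S}\hat F_i^{\dagger}\big)\Psi'$ and $\hat F_i^{\dagger}$ carries a position eigenstate to a momentum eigenstate (up to a sign on the label), the amplitude of $\Psi$ in the mixed basis equals $\Psi'$ evaluated at an affine, sign-flip relabeling of its argument; substituting this linear change of variables into the quadratic exponent keeps it quadratic, yielding exactly the form~(\ref{eq:stabstatemixedrep}) with coefficients $\bs\theta_{\beta\bs x},\bs\eta_{\beta\bs x}$ taken mod $d$, and the prefactor is fixed by $\|\Psi\|=1$ together with the fact that all amplitudes in this basis have equal modulus.

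The main obstacle is the transversality fact, and it is exactly where primality is used: over a field it follows by induction on $n$ (peel off the last conjugate pair $(p_n,q_n)$; if $V_0$ meets the corresponding plane pick a nonzero isotropic vector of $V_0$ there, perform symplectic reduction by it, apply the inductive hypothesis in dimension $2(n-1)$, and lift the transverse coordinate Lagrangian back up), whereas over $\mathbb{Z}/d\mathbb{Z}$ with $d$ composite a maximal isotropic submodule need neither be free nor transverse to any $W_S$ — consistent with the existence, noted earlier, of stabilizer states at odd non-prime $d$ that admit no mixed Gaussian representation.
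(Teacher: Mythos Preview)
Your proof is correct and takes a genuinely different, more structural route than the paper. The paper argues coordinate-by-coordinate directly from the linear system \(\bs\Phi\cdot\bs x=\bs r\): for each qudit it isolates one equation in \((p_i,q_i)\) and uses invertibility of nonzero coefficients in the field \(\mathbb{F}_d\) to show the support projects onto all of \(q_i\) or all of \(p_i\). You instead recognise the Wigner support as an affine Lagrangian (the symplectic image of the \(p\)-plane under \(\bs{\mathcal M}_{\hat V}^{-1}\)), invoke the single statement that over a field every Lagrangian is transverse to some coordinate Lagrangian \(W_S\), and then reduce to Corollary~\ref{cor:stabstates} by a partial Fourier transform via Hadamards. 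What this buys you is one clean symplectic lemma in place of the paper's somewhat informal ``combine the two equations so that \(p_1,q_1\) appear in only one'' step for \(n>1\); what the paper's approach buys is that it is entirely self-contained and never leaves the level of explicit row operations. Your inductive sketch of the transversality lemma is terse (in particular it does not say what to do when \(V_0\) misses the last symplectic \(2\)-plane \(\mathrm{span}(\bs e_{p_n},\bs e_{q_n})\) altogether), but the result is standard and true over a field, and both arguments identify primality of \(d\) as the place where the proof would break. The paper also supplies a second, independent proof in the Appendix via Gross's support-dichotomy lemma, which is different again from both your argument and the main-text one.
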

\begin{proof}
  We begin with a one-qudit case. We examine the equation specified by \(\bs \Phi \cdot \bs x = \bs r\):
  \begin{equation}
    \alpha q_1 + \beta p_1 = \gamma
  \end{equation}
  for \(\alpha\), \(\beta\), and \(\gamma \in \mathbb{Z}/d\mathbb{Z}\). If \(\alpha = 0\) then \(\Psi(q_1)\) is maximally supported and if \(\beta = 0\) then \(\Psi(p_1)\) is maximally supported since the equation specifies a line on \(\mathbb Z/d \mathbb Z\) in \(q_1\) and \(p_1\) respectively. If \(\alpha \ne 0\) and \(\beta \ne 0\) then the equation can be rewritten as
  \begin{equation}
    q_1 + (\beta/\alpha) p_1 = \gamma/\alpha,
  \end{equation}
  and it follows that \(q_1\) can take any values on \(\mathbb Z/d \mathbb Z\) and so \(\Psi(q_1)\) is maximally supported. However, it is also possible to reexpress the equation as:
  \begin{equation}
    p_1 + (\alpha/\beta) q_1 = \gamma/\beta.
  \end{equation}
It follows that \(p_1\) can also take any values on \(\mathbb Z/d \mathbb Z\)---\(\Psi(p_1)\) is also maximally supported. Therefore, one can always choose either a \(p_1\)- or \(q_1\)-basis such that the state is maximally supported and so is representable by a Gaussian function.

  We now consider adding another qudit such that the state becomes \({\Psi}_x(p_1,p_2,q_1,q_2)\). There are now two equations specified by \(\bs \Phi \cdot \bs x = \bs r\) and it follows that it is always possible to combine the two equations such that \(p_1\) and \(q_1\) are only in one equation and written in terms of each other (and generally the second degree of freedom):
  \begin{equation}
    \alpha q_1 + \beta p_1 + \gamma q_2 + \delta p_2 = \epsilon,
  \end{equation}
  for \(\alpha\), \(\beta\), \(\gamma\), \(\delta\) and \(\epsilon \in \mathbb{Z}/d\mathbb{Z}\).
  It will turn out that the \(\gamma q_2 + \delta p_2\) term is irrelevant. We can rewrite the above equation as:
  \begin{equation}
    q_1 + (\beta/\alpha) p_1 + (\gamma/\alpha) q_2 + (\delta/\alpha) p_2 = \epsilon/\alpha,
  \end{equation}
  if \(\alpha \ne 0\). Since there is no other equation specifying \(p_1\), this is an equation for a line on \(\mathbb Z/ d \mathbb Z\) and so \(\Psi\) is is maximally supported on \(q_1\). Otherwise, rewriting the above equation as:
  \begin{equation}
    p_1 + (\alpha/\beta) q_1 +(\gamma/\beta) q_2 + (\delta/\beta) p_2 = \epsilon/\beta,
  \end{equation}
  if \(\beta \ne 0\) shows that \(\Psi\) is maximally supported on \(p_1\). If \(\alpha = \beta = 0\) then both \(p_1\) and \(q_1\) are undetermined and so either representation produces a maximally supported state.

  The same procedure can be performed to find if \(q_2\) or \(p_2\) produce a maximally supported state. As can be seen, we are really just repeating the same procedure as we did when there was only one qudit because the other degrees of freedom have no impact on this determination. Expressing \(\Psi\) in the basis that is maximally supported in every degree of freedom means that it is therefore a Gaussian.
  
  Therefore, it follows that every degree of freedom (corresponding to a qudit) is maximally supported in either the \(p\)- or \(q\)- basis and so Eq.~\ref{eq:stabstatemixedrep} always describes stabilizer states for odd prime \(d\).\qed
\end{proof}

The form of Eq.~\ref{eq:stabstatemixedrep} is more general than Eq.~\ref{eq:stabstateposrep} because it does not depend on the support of the state. As we saw in the proof, this representation is generally not unique; for every qudit \(i\) that is not a position or momentum state, \(x_i\) can be either \(p_i\) or \(q_i\). However, if it is a position state then \(x_i = p_i\) and if it is a momentum state then \(x_i = q_i\); position and momentum states must be expressed in their conjugate representation in order to be captured by a Gaussian of the form in Eq.~\ref{eq:stabstatemixedrep} instead of Kronecker deltas.

The reason this mixed representation doesn't hold for non-prime odd \(d\) is that the coefficients above can be (multiples of) prime factors of \(d\) and so no longer produce ``lines'' in \(p_i\) or \(q_i\) that cover all of \(\mathbb Z/d \mathbb Z\). An alternative proof of this corollary that explores this case further is presented in the Appendix.

An example of the different classes of stabilizer states that are possible for odd prime \(d\), in terms of their support, is shown in Fig.~\ref{fig:quditstabstates_dis7}. There it can be seen that a stabilizer state is either maximally supported in \(p_i\) or \(q_i\), and is a Kronecker delta function in the other degree of freedom, or it is maximally supported in both. 

\begin{figure}[ht]
\includegraphics[scale=1.0]{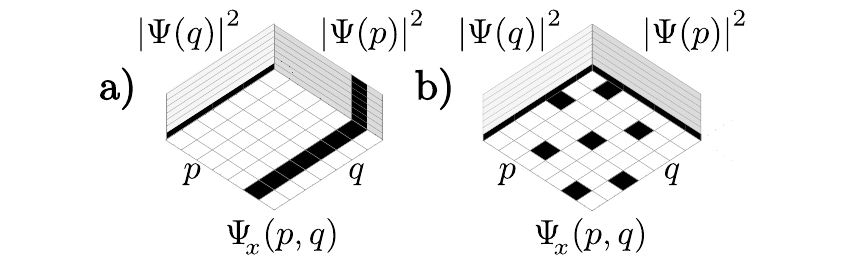}
\caption{The two classes of stabilizer states possible for odd prime \(d=7\) in terms of support: a) maximally supported in \(p\) or \(q\) and b) maximally supported in \(p\) and \(q\). The central grids denote the Wigner function \(\ketbra{\Psi}{\Psi}_x(p,q)\) of a stabilizer state \(\Psi\) with \(d=7\). The projection of this state onto \(p\)-space is shown in the upper right (\(|\Psi(p)|^2\)) and the projection onto \(q\)-space is shown in the upper left (\(|\Psi(q)|^2\)).}
\label{fig:quditstabstates_dis7}
\end{figure}

On the other hand, for odd non-prime \(d\), we see in Fig.~\ref{fig:quditstabstates_dis15} that another class is possible: stabilizer states that are maximally supported in neither \(p_i\) or \(q_i\). In fact, rotating the basis in any of the discrete angles afforded by the grid still does not produce a basis that is maximally supported (as discussed in the Appendix). Notice also, that Fig.~\ref{fig:quditstabstates_dis15}b shows that it is no longer true that a state that is maximally supported in only \(q_i\) or \(p_i\) is automatically a Kronecker delta when expressed in terms of the other.
\begin{figure}[ht]
\includegraphics[scale=1.0]{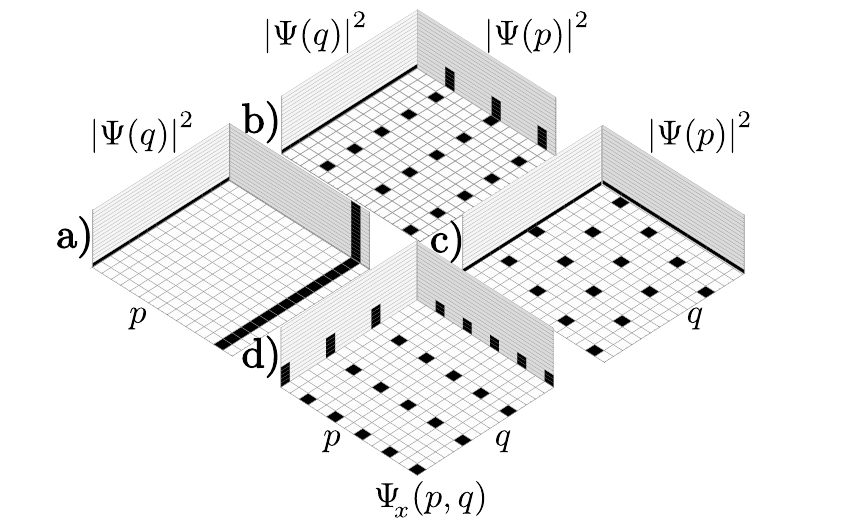}
\caption{The four classes of stabilizer states possible for odd non-prime \(d=15\) in terms of support: a) \& b) maximally supported in \(p\) or \(q\), c) maximally supported in \(p\) and \(q\), and d) not maximally supported in \(p\) or \(q\). The central grids denote the Wigner function \(\ketbra{\Psi}{\Psi}_x(p,q)\) for a stabilizer state \(\Psi\) with \(d=15\). The projection of this state onto \(p\)-space is shown in the upper right (\(|\Psi(p)|^2\)) and the projection onto \(q\)-space is shown in the upper left (\(|\Psi(q)|^2\)).}
\label{fig:quditstabstates_dis15}
\end{figure}

In summary, stabilizer states have Wigner function \(\delta_{\bs \Phi \cdot \bs x, \bs r}\) and, for odd prime \(d\), are Gaussians in mixed representation that lie on the Weyl phase space points \((\bs x_p, \bs x_q)\). Heuristically, they correspond to Gaussians in the continuous case that spread along their major axes infinitely. The only reason that they aren't always expressible as Gaussians in the mixed representation is that they sometimes ``skip'' over some of the discrete grid points due to the particular angle they lie along phase space for odd non-prime \(d\).

Wigner functions \(\Psi_x(\bs x)\) of stabilizer states propagate under \(\hat {V}\) as \(\Psi_x(\bs{\mathcal M}_{\hat {V}} (\bs x + \bs \alpha_{\hat {V}}/2) + \bs \alpha_{\hat {V}}/2)\), and this preserves the form of the state. In other words, Clifford gates take stabilizer states to other stabilizer states, as expected, just like in the continuous case Gaussians go to other Gaussians under harmonic evolution. It is also clear that stabilizer state propagation under Clifford gates can be expressed by a path integral at order \(\hbar^0\).

\section{Discrete Phase Space Representation of Universal Quantum Computing}
\label{sec:discreteuniversalcomp}

A similar statement to the one we made in Section~\ref{sec:contcenterchord}---that any operator can be expressed as an infinite sum of path integral contribution truncated at order \(\hbar^1\)---can be made in discrete systems. However, there is an important difference in the number of terms making up the sum. 

To see this we can follow reasoning that is similar to that employed in the continuous case. Namely, from Eq.~\ref{eq:discretesupofreflections} we see that any discrete operator can also be expressed as a linear combination of reflections, but unlike the continuous case, this sum has a finite number of terms. Since reflections can be expressed fully by the discrete path integral truncated at order \(\hbar^1\), as discussed previously, it follows that any unitary operator in discrete systems can be expressed as a finite sum of contributions from path integrals truncated at order \(\hbar^1\). Again, the same statement can be made by considering the chord representation in terms of translations.

Hence, quantum propagation in discrete systems can be fully treated by a \emph{finite} sum of contributions from a path integral approach truncated at order \(\hbar^1\).

To gather some understanding of this statement, we can consider what is necessary to add to our path integral formulation when we complete the Clifford gates with the T-gate, which produces a universal gate set.

The T-gate is generalized to odd \(d\)-dimensions by
\begin{equation}
  \hat {T} = \sum_{j \in \mathbb{Z}/d\mathbb{Z}} \omega^{\frac{(j-1)j}{4}} \ketbra{j}{j}.
\end{equation}
This gate can no longer be characterized by an \(\bs{\mathcal M}\) with integer entries. In particular,
\begin{equation}
\bs{\mathcal M}_{\hat {T}}  = \left( \begin{array}{cc} 1 & \frac{1}{2} \\ 0 & 1  \end{array} \right),
\end{equation}
and \(\bs \alpha_{\hat {T}} = \left( -\frac{1}{4}, 0 \right)\). This corresponds to
\begin{equation}
\bs{\mathcal B}_{\hat {T}} = \left( \begin{array}{cc} 0 & 0\\ 0 & -\frac{1}{4}\end{array} \right).
\end{equation}

Thus, the center function
\begin{equation}
T_x(x_p,x_q) = e^{-\frac{2 \pi i}{d} \frac{1}{4} (x_q-x_q^2)},
\end{equation}
corresponding to the phase shift Hamiltonian applied for only half the unit of time.

The operator can thus be written:
\begin{equation}
  \hat{T} = d^{-2} \sum_{\substack{x_p,x_q, \\ \xi_p,\xi_q \in \\ \mathbb{Z}/d\mathbb{Z}}} e^{ -\frac{2 \pi i}{d} \left[ \frac{1}{4} (x_q - x_q^2) - d \left(x_p \xi_q - x_q \xi_p \right) \right] } \hat {Z}^{\xi_p} \hat {X}^{\xi_q}.
\end{equation}

Though this operator is quadratic, it no longer takes the Weyl center points to themselves. This means that the \(\hbar^0\) limit of Eq.~\ref{eq:discretesemiproplowestorder} is now insufficient to capture all the dynamics because the overlap with any \(\ket{q}\) will now involve a linear superposition of partially overlapping propagated manifolds. It must therefore be described by a path integral formulation that is complete to order \(\hbar^1\). In particular,
\begin{equation}
  \label{eq:tgaterefl}
  \hat{T} = d^{-1} \sum_{\substack{x_p, x_q \in \\ (\mathbb{Z} / d \mathbb{Z})}} e^{-\frac{2 \pi i}{d} \frac{1}{4} (x_q-x_q^2)} \hat {R}( x_p, x_q ),
\end{equation}
where \(\hat {R}\) should be substituted by its path integral.

Note that this does not imply efficient classical simulation of quantum computation but quite the opposite. Indeed, for \(n\) qudits, there are \(d^{2n}\) terms in the sum above. While every Weyl phase space point has only a single associated path when acted on by Clifford gates, this is no longer true in any calculation of evolution under the T-gate. Eq.~\ref{eq:tgaterefl} expresses the T-gate as a sum over phase space operators (the reflections) evaluated on all the phase space points. Thus, it can be interpreted as associating an exponentially large number of paths to every phase space point, instead of the single paths found for Clifford gates. Therefore, any simulation of the T-gate naively necessitates adding up an exponential large sum over paths and so is comparably inefficient.

\section{Conclusion}
\label{sec:conc}

The treatment presented here formalizes the relationship between stabilizer states in the discrete case and Gaussians in the continuous case, which has often been pointed out~\cite{Gross06}. Namely, only Gaussians that lie along Weyl phase space points directly correspond to Gaussians in the continuous world in terms of preserving their form under a harmonic Hamiltonian, an evolution that is fully describable by truncating the path integral at order \(\hbar^0\). Furthermore, we showed that the Clifford group gates, generated by the Hadamard, phase shift and controlled-not gates, can be fully described by a truncation of their semiclassical propagator at lowest order. We found that this was because their Hamiltonians are harmonic and take Weyl phase space points to themselves. This proves the Gottesman-Knill theorem. The T-gate, needed to complete a universal set with the Hadamard, was shown not to satisfy these properties, and so requires a path integral treatment that is complete up to \(\hbar^1\). The latter treatment includes a sum of terms for which the number of terms scales exponentially with the number of qudits.

We note that our observations pertaining to classical propagation in continuous systems have long been very well known. In the continuous case, the Wigner function of a quantum state is non-negative if and only if the state is a Gaussian~\cite{Hudson74} and it has also long been known that quantum propagation from one Gaussian state to another only requires propagation up to order \(\hbar^0\)~\cite{Heller75}. Indeed, it has been shown that this is a continuous version of ``stabilizer state propagation'' in finite systems~\cite{Barnes04}, and is therefore, in principle, useful for quantum error correction and cluster state quantum computation~\cite{Nielsen06,Braunstein12}. It is also well known in the discrete case that quadratic Hamiltonians can act classically and be represented by symplectic transformations in the study of quantum cat maps~\cite{Hannay80,Rivas99,Rivas00} and linear transformations between propagated Wigner functions~\cite{Bianucci02}. Interestingly though, this latter work appears to have predated the discovery that stabilizer states have positive-definite Wigner functions~\cite{Gross06} and therefore, as far as we know, has not been directly related to stabilizer states and the \(\hbar^0\) limit of their path integral formulation, which is a relatively recent topic of particular interest to the quantum information community and those familiar with Gottesman-Knill. Otherwise, this claim has been pointed out in terms of concepts related to positivity and related concepts in past work~\cite{Cormick06,Mari12}.

We also note that our exploration of continous systems is not meant to explore the highly related topic of continuous-variable quantum information. Many topics therein apply to our discussion here, such as the continuous stabilizer state propagation we mentioned above. However, our intention in introducing the continuous infinite-dimensional case was not to address these topics but to instead relate the established continuous semiclassical formalism to the discrete case, and thereby bridge the notions of phase space and dynamics between the two worlds.

There is an interesting observation to be made of the weights of the reflections that make up the complete path integral formulation of a unitary operator. Namely, as is clear in Eqs.~\ref{eq:contsupofreflections} and~\ref{eq:discretesupofreflections}, the coefficients consist of the exponentiated center generating function multiplied by \(\frac{i}{\hbar}\). This is very similar to the form of the vVMG path integral in Eqs.~\ref{eq:contcenterrepvVMG} and~\ref{eq:discretecenterrepvVMG}. However, in Eqs.~\ref{eq:contsupofreflections} and~\ref{eq:discretesupofreflections}, reflections serve as the prefactors measuring the reflection spectral overlap of a propagated state with its evolute and the center generating actions provide the quantal phase. Thus, this formulation can be interpreted as an alternative path integral formulation of the vVMG, one consisting of reflections as the underlying classical trajectory only, instead of the more tailored trajectories that result from applying the method of steepest descents directly on an operator.

The fact that any unitary operator in the discrete case can be expressed as a sum consisting of a finite number of order \(\hbar^1\) path integral contributions, has the added interesting implication that uniformization---higher order \(\hbar\) corrections to the ``primitive'' semiclassical forms such as Eq.~\ref{eq:contcenterrepvVMG}---isn't really necessary in discrete systems. Uniformization is characterized by the proper treatment of coalescing saddle points and has long been a subject of interest in continuous systems where ``anharmonicity'' bedevils computationally efficient implementation. It seems that this problem isn't an issue in the discrete case since a fully complete sum with a finite number of terms, naively numbering \(d^2\) for one qudit, exists.

As a last point, there is perhaps an alternative way to interpret the results presented here, one in terms of ``resources''. Much like ``magic'' (or contextuality) and quantum discord can be framed as a resource necessary to perform quantum operations that have more power than classical ones, it is possible to frame the order in \(\hbar\) that is necessary in the underlying path integral describing an operation as a resource necessary for quantumness. In this vein, it can be said that Clifford gate operations on stabilizer states are operations that only require \(\hbar^0\) resources while supplemental gates that push the operator space into universal quantum computing require \(\hbar^1\) resources. The dividing line between these two regimes, the classical and quantum world, is discrete, unambiguous and well-defined.

\section{Acknowledgments}
The authors thank Prof. Alfredo Ozorio de Almeida for very fruitful discussions about the center-chord representation in discrete systems and Byron Drury for his help proof-reading and bringing ~\cite{Penney16} to our attention. This work was supported by AFOSR award no. FA9550-12-1-0046.

\section{Appendix}
\label{sec:appendix}

Gross proved that for odd prime \(d\)~\cite{Gross07}:
\begin{lemma}
Let \(\Psi\) be a state vector with positive Wigner function for odd prime \(d\). If \(\Psi\) is supported on two points, then it has maximal support.
\end{lemma}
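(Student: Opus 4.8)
The plan is to argue by contradiction, directly from the pure-state Wigner formula of Eq.~\eqref{eq:weylfunctionpurestatediscrete}, and it suffices to work with a single qudit (\(n=1\)) since "supported on two points'' is a statement about one degree of freedom. Suppose \(\Psi\) has a non-negative Wigner function and \(\mathrm{supp}(\Psi)=\{a,b\}\) with \(a\ne b\) in \(\mathbb Z/d\mathbb Z\). Using \(\tfrac{d+1}{2}\equiv 2^{-1}\pmod d\) (well defined because \(d\) is odd) and reparametrizing the sum in Eq.~\eqref{eq:weylfunctionpurestatediscrete} by \(m=x_q+2^{-1}\xi_q\), \(n=x_q-2^{-1}\xi_q\), the Wigner function becomes \(\Psi_x(x_p,x_q)=\tfrac1d\sum e^{-\frac{2\pi i}{d}(m-n)x_p}\Psi(m)\Psi^*(n)\), the sum over \(m,n\in\mathbb Z/d\mathbb Z\) with \(m+n\equiv 2x_q\pmod d\). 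Since \(\Psi(m)\Psi^*(n)\ne 0\) forces \(m,n\in\{a,b\}\), the only columns \(x_q\) at which \(\Psi_x\) does not vanish identically are \(x_q\in\{a,\,b,\,c\}\) with \(c:=2^{-1}(a+b)\).

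First I would check the easy columns. At \(x_q=a\) only the pair \((m,n)=(a,a)\) survives, giving \(\Psi_x(x_p,a)=\tfrac1d|\Psi(a)|^2>0\); likewise \(\Psi_x(x_p,b)>0\); and for every other \(x_q\notin\{a,b,c\}\) we get \(\Psi_x=0\). The entire content therefore lives in the "midpoint column'' \(x_q=c\). Because \(a\ne b\) and \(d\) is odd, the three sums \(2a,\ 2b,\ a+b\) are pairwise distinct mod \(d\), so at \(x_q=c\) precisely the two cross-pairs \((a,b)\) and \((b,a)\) contribute, and writing \(\Psi(a)\Psi^*(b)=re^{i\phi}\) with \(r>0\):
\begin{equation}
  \Psi_x(x_p,c)=\frac1d\Big(re^{i\phi}e^{-\frac{2\pi i}{d}(a-b)x_p}+re^{-i\phi}e^{\frac{2\pi i}{d}(a-b)x_p}\Big)=\frac{2r}{d}\cos\!\Big(\tfrac{2\pi(a-b)x_p}{d}-\phi\Big).
\end{equation}

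The decisive step uses primality: since \(a-b\not\equiv 0\) and \(d\) is prime, \((a-b)x_p\) runs over all of \(\mathbb Z/d\mathbb Z\) as \(x_p\) does, so \(\{\Psi_x(x_p,c)\}_{x_p}\) is exactly \(\tfrac{2r}{d}\cos(\theta-\phi)\) sampled at the \(d\) equally spaced angles \(\theta=2\pi k/d\), \(k=0,\dots,d-1\). For \(d\ge3\) these points cannot all fall in the closed half-circle where the cosine is non-negative (an arc of length \(\pi\)), because \(d\) equally spaced points only fit inside an arc once its length is at least \(2\pi(d-1)/d>\pi\); hence \(\Psi_x(x_p,c)<0\) for at least one \(x_p\), contradicting non-negativity of the Wigner function. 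Thus no non-negative-Wigner state in odd prime dimension is supported on exactly two points, which is the asserted implication (for \(d\ge3\), "supported on two points'' can only coexist with "maximal support'' vacuously, so the lemma is precisely this non-existence).

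I do not expect a genuine obstacle here: the only thing to be careful about is the index bookkeeping in Eq.~\eqref{eq:weylfunctionpurestatediscrete} (the \(2^{-1}\) identifications and the exact set of contributing pairs in each column) and a clean statement of the elementary fact that a cosine sampled at \(d\ge3\) equally spaced arguments is somewhere negative. As an independent sanity check one can also derive the claim from the structural results already in hand: \(\Psi\) non-negative \(\Rightarrow\) \(\Psi\) is a stabilizer state (Theorem~\ref{thm:stabstates}) \(\Rightarrow\) its Wigner function is \(\delta_{\bs\Phi\cdot\bs x,\bs r}\), the indicator of an affine line (Theorem~\ref{thm:wigfnofstabstates}); the \(q\)-marginal \(\sum_{x_p}\Psi_x(x_p,\cdot)\) of such an indicator is \(d\) times a single Kronecker delta when the line is "vertical'' and the constant \(1/d\) otherwise (using \(d\) prime), so the support in \(q\) has size \(1\) or \(d\), never \(2\). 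I would nonetheless present the direct computation as the main proof, since it is self-contained and does not presuppose the full classification of stabilizer states.
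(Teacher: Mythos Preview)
The paper does not supply its own proof of this lemma; it simply quotes the statement from Gross~\cite{Gross07} and then uses it in the alternative proof of Corollary~\ref{cor:stabstatemixedrep}. So there is no in-paper argument to compare against, and your task is really to reproduce (some version of) Gross's reasoning.

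Your direct Wigner-function computation is correct and clean for the reading you adopt, namely \(\mathrm{supp}(\Psi)=\{a,b\}\) with exactly two elements: the identification of the three relevant columns, the cosine formula at the midpoint column \(c=2^{-1}(a+b)\), and the pigeonhole fact that \(d\ge 3\) equally spaced samples of a cosine cannot all be non-negative are all fine. However, the way the lemma is actually \emph{used} in the Appendix (to conclude that a non-maximally supported slice must be a single Kronecker delta) requires the stronger reading ``if \(\Psi\) is nonzero at \emph{at least} two points, then it is nonzero everywhere,'' i.e.\ \(|\mathrm{supp}(\Psi)|\in\{1,d\}\). Your main argument does not immediately give this: once \(\Psi\) is allowed to be nonzero at more than two points, the midpoint column picks up additional pairs \((m,n)\) with \(m+n\equiv a+b\), and the pure cosine formula no longer holds.

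Two remarks. First, your ``sanity check'' via Theorems~\ref{thm:stabstates} and~\ref{thm:wigfnofstabstates} is valid within the paper's logical structure (those results are established independently of the lemma) and \emph{does} yield the stronger statement, so you already have a complete proof in hand---just make that the main argument, not the afterthought. Second, Gross's own route is closer in spirit to your direct computation but extracts a single Fourier mode rather than the whole column: from \(\sum_{x_p} e^{2\pi i\xi_q x_p/d}\Psi_x(x_p,x_q)=\Psi(x_q+2^{-1}\xi_q)\Psi^*(x_q-2^{-1}\xi_q)\) and \(W\ge 0\) one gets \(|\Psi(x_q+t)\,\Psi(x_q-t)|\le |\Psi(x_q)|^2\), hence the support is closed under taking midpoints; since midpoint-closure of two distinct points in \(\mathbb Z/d\mathbb Z\) (for \(d\) prime) is everything, the full statement follows. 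This is the natural self-contained generalization of your cosine argument.
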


With this lemma in mind, we can offer an alternative proof of Corollary~\ref{cor:stabstatemixedrep}:
\addtocounter{corollary}{-1}
\begin{corollary}
For odd prime \(d\), if \(\Psi\) is a stabilizer state then there always exists a mixed representation in position and momentum such that:
\begin{equation}
  \Psi_{\theta_{\beta \bs x},\eta_{\beta \bs x}}(\bs x) = \frac{1}{\sqrt{d}} \exp \left[\frac{2 \pi i}{d} \left( \bs x^T \bs \theta_{\beta \bs x} \bs x + \bs \eta_{\beta \bs x} \cdot \bs x \right) \right],
\end{equation}
where \(x_i\) can be either \(p_i\) or \(q_i\).
\end{corollary}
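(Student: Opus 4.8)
The plan is to derive the corollary from Theorem~\ref{thm:wigfnofstabstates} and Corollary~\ref{cor:stabstates}, with Gross's lemma recalled above supplying the one ingredient special to odd primes. By Theorem~\ref{thm:wigfnofstabstates} the Wigner function of $\Psi$ is the indicator $\delta_{\bs\Phi\bs x,\bs r}$ of the affine set $A=\{\bs x\in(\mathbb Z/d\mathbb Z)^{2n}:\bs\Phi\bs x=\bs r\}$; write $L=\ker\bs\Phi$ for its linear part. Since $\mathbb Z/d\mathbb Z=\mathbb F_{d}$ is a field and $\Psi$ is pure and normalized, $\dim L=n$, and $L$ is the $n$-dimensional isotropic (hence Lagrangian) subspace of the symplectic space $(\mathbb F_{d}^{2n},\bs{\mathcal J})$ that carries the stabilizer group of $\Psi$.

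Next I would identify the mixed representations in the statement with the coordinate Lagrangians of $\mathbb F_{d}^{2n}$. Choosing $x_{i}=q_{i}$ for $i$ in a subset $S$ and $x_{i}=p_{i}$ for $i\notin S$ amounts to applying the Clifford $\bigotimes_{i\notin S}\hat{F}_{i}$, so the mixed amplitude $\tilde\Psi$ is again a stabilizer state and its support $\{\bs x:\tilde\Psi(\bs x)\neq0\}$ is exactly the image of the projection of $L$ onto the $n$ chosen coordinates. Hence $\tilde\Psi$ is maximally supported iff $L$ meets the coordinate Lagrangian $K_{S}=\langle\{p_{i}:i\in S\}\cup\{q_{j}:j\notin S\}\rangle$ only in $0$, and in that case Corollary~\ref{cor:stabstates} applied to $\tilde\Psi$ in its own computational basis gives $\tilde\Psi(\bs x)\propto\exp[\tfrac{2\pi i}{d}(\bs x^{T}\bs\theta_{\beta\bs x}\bs x+\bs\eta_{\beta\bs x}\cdot\bs x)]$, i.e.\ Eq.~\eqref{eq:stabstatemixedrep} (the normalization being fixed by purity). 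Everything therefore reduces to the claim: \emph{every Lagrangian $L\subseteq\mathbb F_{d}^{2n}$ is transverse to at least one of the $2^{n}$ coordinate Lagrangians $K_{S}$.}

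I would prove this claim by induction on $n$, peeling off the last qudit. For $n=1$ a line lies in at most one of $\langle p_{1}\rangle$ and $\langle q_{1}\rangle$ and so is transverse to the other; this base case is precisely the content of Gross's lemma above — a positive-Wigner one-qudit state supported on two points is fully supported, i.e.\ its Wigner support cannot be a proper nonzero line — and this is exactly where odd primality enters, since for composite $d$ the support can be a proper nonzero subgroup and the statement fails. For the inductive step, let $\pi_{n}\colon L\to\langle p_{n},q_{n}\rangle$ be the projection to the last qudit; isotropy forces $L\cap\langle p_{n},q_{n}\rangle\subseteq\pi_{n}(L)^{\perp}$ in $\mathbb F_{d}^{2}$, and $\pi_{n}(L)\neq0$ (otherwise $L$ would be an $n$-dimensional isotropic subspace of the $2(n-1)$-dimensional symplectic space on the other qudits, impossible over a field). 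If $\pi_{n}(L)$ contains a vector with nonzero $p_{n}$-component, then $q_{n}\notin L$, $L\cap\{x_{p_{n}}=0\}$ has dimension $n-1$ and injects under ``forget qudit $n$'' onto a Lagrangian $\bar L\subseteq\mathbb F_{d}^{2(n-1)}$, so taking $n\notin S$ together with the $\bar S$ from the inductive hypothesis yields $L\cap K_{S}=0$; the remaining case $\pi_{n}(L)=\langle q_{n}\rangle$ is symmetric with $n\in S$.

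Finally I would assemble the pieces: the claim furnishes $S$ with $L\cap K_{S}=\{0\}$, so $\tilde\Psi(\bs x)\neq0$ for all $\bs x$, and Corollary~\ref{cor:stabstates} gives the stated Gaussian form. I expect the main obstacle to be the bookkeeping of the inductive step — in particular checking that ``forget qudit $n$'' really sends $L\cap\{x_{p_{n}}=0\}$ isomorphically onto a Lagrangian subspace (this rests on the isotropy-driven facts $L\cap\langle q_n\rangle=0$ and $\dim(L\cap\{x_{p_n}=0\})=n-1$) — and making sure the argument genuinely collapses for composite $d$: it does, because $\mathbb Z/d\mathbb Z$ is then not a field, so dimension counting, ``nonzero image $\Rightarrow$ surjective onto $\mathbb F_{d}$'', and Gross's one-qudit lemma all fail at once, which is why stabilizer states of non-prime odd $d$ need not admit the mixed Gaussian representation.
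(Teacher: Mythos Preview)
Your proof is correct. Both you and the paper start from Theorem~\ref{thm:wigfnofstabstates} (the Wigner support is an affine set), but the arguments then diverge. The paper's main-body proof works directly with the linear system \(\bs\Phi\bs x=\bs r\): for each qudit it isolates an equation involving \(p_i,q_i\), divides through by a nonzero coefficient (using that \(\mathbb F_d\) is a field), and reads off that \(q_i\) or \(p_i\) ranges over all of \(\mathbb Z/d\mathbb Z\); the multi-qudit case is handled by example rather than by a formal induction. The Appendix offers an independent argument via Gross's support lemma and a rotation/contradiction on one-qudit slices. You instead abstract the problem to symplectic linear algebra---maximal support in a mixed basis is precisely transversality of the Lagrangian \(L=\ker\bs\Phi\) to a coordinate Lagrangian \(K_S\)---and then prove by a clean induction on \(n\) that every Lagrangian over \(\mathbb F_d\) is transverse to some \(K_S\). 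Your route is more structural and makes the role of primality transparent (it enters as ``\(\mathbb Z/d\mathbb Z\) is a field, so dimension counts and the rank--nullity step go through''), whereas the paper's coordinate argument is more elementary but somewhat informal at the \(n\)-qudit step. One minor remark: your base case is in fact simpler than Gross's lemma---a one-dimensional subspace of \(\mathbb F_d^{2}\) can coincide with at most one of the two coordinate axes---so you only need the field property there, not the full two-point support dichotomy.
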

\begin{proof}

  We will show that for odd prime \(d\), every degree of freedom can only be fully supported or a Kronecker delta, for all other degrees of freedom fixed; WLOG we will consider a two-dimensional stabilizer state \(\Psi(q_1,q_2)\) and show that if \(\exists\, q'_1\) such that \(\Psi(q'_1, q_2) \ne 0\) \(\forall q_2\) then \(\Psi(q_1,q_2)\ne 0\) \(\forall\, q_1, q_2\) and vice-versa (if \(\exists\, q'_1\) s.t. \(\Psi(q'_1, q_2)\) is a delta function then \(\Psi(q_1,q_2)\) is a delta function in \(q_2\) \(\forall q_1\)). Therefore, if a degree of freedom is maximally supported in one degree of freedom for all others fixed, then it is maximally supported for all values of the other degrees of freedom. On the other hand, if it is a delta function in one degree of freedom for all others fixed, then it is a delta function for all values of the other degrees of freedom.

  Assume that for \(q_1, q_2 \in \{0, \ldots, d-1\}\), \(\exists\, q'_1, q''_1\) such that \(\Psi(q'_1,q_2) = 0\) for some \(q_2\) and \(\Psi(q''_1, q_2) \ne 0\) \(\forall q_2\). We proceed to prove by contradiction.

  Hence \(\Psi(q''_1, q_2) \equiv \Psi_{q''_1} \propto \left[ \frac{2 \pi i}{d} \left( \theta_{q''_1} q^2_2 + \eta_{q''_1} q_2 \right) \right]\) by Corollary~\ref{cor:stabstates} and \(\Psi(q'_1,q_2) \equiv \Psi_{q'_1}(q_2) \propto \delta_{q_2, q(q'_1)}\) for some \(q(q'_1)\in\mathbb{Z}/d\mathbb{Z}\) by~\cite{Gross06}.

  We can rotate in \(p_2\)-\(q_2\) space to form a new basis \(q^*_2\) in \(d\) discrete angles (since \(\theta_{q''_1} \in \mathbb{Z}/d\mathbb{Z}\)) such that \(\Psi_{q'_1}(q^*_2) \ne 0\) (since a delta function is not maximally supported only at the one angle perpendicular to it). Since there exists \((d-1)\) other values of \(q_1\) other than \(q'_1\), it follows that there exists at least one such angle such that \(\Psi_{q_1}(q^*_2) \ne 0\) \(\forall q_1, q^*_2 \in \{0,\ldots,d-1\}\). We define \(q^*_2\) as the basis that is rotated by this angle with respect to \(q_2\).

  By Corollary~\ref{cor:stabstates}, this means that
  \begin{equation}
    \Psi^*(q_1,q^*_2) \propto \exp( \theta'_{11} q^2_1 + \theta'_{22} {q^*_2}^2 + 2 \theta'_{12} q_1 q^*_2 + \eta'_1 q_1 + \eta'_2 q_2),
  \end{equation}
  where by \(\Psi^*\) we mean \(\Psi\) expressed in the new basis \(q^*_2\) in its second degree of freedom.
  Hence,
  \begin{equation}
    \label{eq:rotatedgaussianbasis}
    \Psi^*_{q_1}(q^*_2) \propto \exp \left[\frac{2 \pi i}{d} \left( \theta'_{q_1} {q^*_2}^2 + \eta'_{q_1} q^*_2 \right) \right] \exp \left[ \frac{2 \pi i}{d} \theta_{12} q_1 q^*_2 \right].
  \end{equation}

  Acting on this last equation to rotate back to \(q_2\), we must produce \(\Psi_{q''_1}(q_2) \propto \delta_{q_2,q(q''_1)}\). But then Eq.~\ref{eq:rotatedgaussianbasis} implies that \(\Psi_{q'_1}(q^*_2)\) must also be proportional to \(\delta_{q_2,q(q'_1)}\). This is a contradiction.

Therefore, if a degree of freedom is maximally supported for all others fixed, then it is maximally supported for all values of the other degrees of freedom and vice-versa. In the latter case, a position state in the \(i\)th degree of freedom can be represented as a Gaussian by using the \(p\)-basis where it becomes a plane wave (\(\theta_i = 0\)). In other words, one can always choose \(x_i\) to be \(p_i\) or \(q_i\) such that Eq.~\ref{eq:stabstatemixedrep} holds for odd prime \(d\).
  \qed
\end{proof}

Finally, the reason this result does not hold for odd non-prime \(d\) is that for discrete Wigner space there are \(d + 1\) unique angles minus all the prime factors of \(d\). For non-prime \(d\), there is more than one such prime factor and so there are cases when one cannot ``rotate'' away all non-maximally supported states.

\bibliography{biblio}{}
\bibliographystyle{unsrt}

\end{document}